\newcommand{\shrinkitems}{\setlength{\itemsep}{0ex}} %
\newcommand{\shrinktable}{\setlength{\abovecaptionskip}{1ex}\setlength{\belowcaptionskip}{-1ex}} %
\newcommand{\ttlpar}[1]{\noindent\textbf{#1}}
\newtheorem{theorem}{Theorem}[section]
\newtheorem{lemma}[theorem]{Lemma}
\newtheorem{definition}[theorem]{Definition}
\newtheorem{observation}[theorem]{Remark}
\theoremstyle{nonumberplain}
\newtheorem{proof}{Proof}
\DeclareMathOperator{\Access}{Access}
\DeclareMathOperator{\Rank}{Rank}
\DeclareMathOperator{\Select}{Select}
\DeclareMathOperator{\Append}{Append}
\DeclareMathOperator{\Insert}{Insert}
\DeclareMathOperator{\Delete}{Delete}
\DeclareMathOperator{\Init}{Init}
\DeclareMathOperator{\Pred}{Predecessor}
\DeclareMathOperator{\RankPrefix}{RankPrefix}
\DeclareMathOperator{\SelectPrefix}{SelectPrefix}
\DeclareMathOperator{\WT}{WT}
\DeclareMathOperator{\LT}{LT}
\DeclareMathOperator{\LB}{LB}
\DeclareMathOperator{\PT}{PT}
\DeclareMathOperator{\Prob}{Prob}
\DeclareMathOperator{\polylog}{polylog}
\DeclareMathOperator{\RangeCount}{RangeCount}
\newcommand{\pos}{\mathrm{pos}}
\newcommand{\occidx}{\mathrm{idx}}
\newcommand{\Cop}{\mathcal{C}_\mathrm{op}}
\newcommand{\bit}[1]{\text{\bf\texttt{#1}}}
\newcommand{\bitnobf}[1]{\text{\texttt{#1}}}
\newcommand{\bitzero}{\bit{0}}
\newcommand{\bitone}{\bit{1}}
\newcommand{\Sset}{S_\text{set}}
\newcommand{\logbinom}{\mathcal{B}}
\begin{document}

\title{The Wavelet Trie: Maintaining an\\Indexed Sequence of Strings
  in Compressed Space\footnote{Work partially supported by MIUR of Italy
    under project AlgoDEEP prot.\mbox{} 2008TFBWL4.}}

\author{
  Roberto Grossi \\ \small Università di Pisa \\ \small \texttt{grossi@di.unipi.it}
  \and
  Giuseppe Ottaviano\footnote{Part of the work done while the author was an intern at Microsoft Research, Cambridge.} \\ \small Università di Pisa \\ \small \texttt{ottavian@di.unipi.it}
}

\date{}

\maketitle

\begin{abstract}
  An \emph{indexed sequence of strings} is a data structure for
  storing a \emph{string sequence} that supports random access,
  searching, range counting and analytics operations, both for exact
  matches and prefix search. String sequences lie at the core of
  column-oriented databases, log processing, and other storage and
  query tasks. In these applications each string can appear several
  times and the order of the strings in the sequence is relevant.
  The prefix structure of the strings is relevant as well:
  common prefixes are sought in strings to extract interesting
  features from the sequence. Moreover, space-efficiency is highly
  desirable as it translates directly into higher performance, since
  more data can fit in fast memory.

  We introduce and study the problem of \emph{compressed indexed
    sequence of strings}, representing indexed sequences of strings in
  nearly-optimal compressed space, both in the static and dynamic
  settings, while preserving provably good performance for the
  supported operations.

  We present a new data structure for this problem, the \emph{Wavelet
    Trie}, which combines the classical Patricia Trie with the Wavelet
  Tree, a succinct data structure for storing a compressed
  sequence. The resulting Wavelet Trie smoothly adapts to a sequence
  of strings that changes over time. It improves on the
  state-of-the-art compressed data structures by supporting a dynamic
  alphabet (i.e.\mbox{} the set of distinct strings) and prefix
  queries, both crucial requirements in the aforementioned
  applications, and on traditional indexes by reducing space occupancy
  to close to the entropy of the sequence.
\end{abstract}

\section{Introduction}

Many problems in databases and information retrieval ultimately reduce
to storing and indexing sequences of strings. Column-oriented
databases represent relations by storing individually each column as
a sequence; if each column is indexed, efficient operations on the
relations are possible. XML databases, taxonomies, and word tries are represented
as labeled trees, and each tree can be mapped to the sequence of its
labels in a specific order; %
indexed operations on the sequence enable fast tree navigation. In
data analytics query logs and access logs are simply sequences of
strings; aggregate queries and counting queries can be performed
efficiently with specific indexes. Textual document search is
essentially the problem of representing a text as the sequence of its
words, and queries locate the occurrences of given words in the
text. Even the storage of non-string (for example, numeric) data can
be often reduced to the storage of strings, as usually the values
can be binarized in a natural way.

\smallskip
\ttlpar{Indexed sequence of strings.}
An \emph{indexed sequence of strings} is a data structure for storing
a \emph{string sequence} that supports random access, searching, range
counting and analytics operations, both for exact matches and prefix
search. Each string can appear several times and the order of the
strings in the sequence is relevant. For a sequence $S \equiv \langle
s_0, \ldots, s_{n-1} \rangle$ of strings, the primitive operations
are:
\begin{itemize}
  \shrinkitems
\item $\Access(\pos)$: retrieve string $s_{\pos}$, where $0 \leq \pos
  < n$.
\item $\Rank(s, \pos)$: count the number of occurrences of string $s$
  in $\langle s_0, \ldots, s_{\pos-1} \rangle$.
\item $\Select(s, \occidx)$: find the position of the $\occidx$-th occurrence of
  $s$ in $\langle s_0, \ldots, s_{n-1} \rangle$.
\end{itemize}
By composing these three primitives it is possible to implement other
powerful index operations.  For example, functionality similar to inverted lists can be
easily formulated in terms of $\Select$. These primitives can be
extended to prefixes. 
\begin{itemize}
\shrinkitems
\item $\RankPrefix(p, \pos)$: count the number of strings in
  $\langle s_0, \ldots, s_{\pos-1} \rangle$ that have prefix $p$.
\item $\SelectPrefix(p, \occidx)$: find the position of the $\occidx$-th string in
  $\langle s_0, \ldots, s_{n-1} \rangle$ that has prefix $p$.
\end{itemize}
Prefix search operations can be easily formulated in terms of
$\SelectPrefix$.  Other useful operations are range counting and
analytics operations, where the above primitives are generalized to a
range $[\pos', \pos)$ of positions, hence to $\langle s_{\pos'},
\ldots, s_{\pos-1} \rangle$.  In this way statistically interesting
(e.g.\mbox{} frequent) strings in the given range $[\pos', \pos)$ and
having a given prefix~$p$ can be quickly discovered (see
Section~\ref{sec:query-algor-wavel} for further operations). 

The
sequence~$S$ can change over time by defining the following operations, 
for any arbitrary string $s$ (which could be \emph{previously unseen}).
\begin{itemize}\shrinkitems
\item $\Insert(s, \pos)$: update the sequence $S$ as $S' \equiv
  \langle s_0, \ldots, s_{\pos-1}, s, s_{\pos}, \ldots, s_{n-1}
  \rangle$ by inserting $s$ immediately before $s_{\pos}$.

\item $\Append(s)$: update the sequence $S$ as $S' \equiv \langle
  s_0, \ldots, s_{n-1}, s \rangle$ by appending $s$ at the end.

\item $\Delete(\pos)$: update the sequence $S$ as $S' \equiv
  \langle s_0, \ldots, s_{\pos-1}, s_{\pos+1}, \ldots, s_{n-1}
  \rangle$ by deleting $s_{\pos}$.

\end{itemize}

\ttlpar{Motivation.}
String sequences lie at the core of column-oriented databases, log
processing, and other storage and query tasks. The prefix operations
supported by an indexed sequence of strings arise in many
contexts. Here we give a few examples to show that the proposed
problem is quite natural. In data analytics for query logs and access
logs, the sequence order is the \emph{time} order, so that a range of
positions $[\pos', \pos)$ corresponds to a given time frame.
The accessed URLs,
paths (filesystem, network, \dots) or any kind of hierarchical references are chronologically stored as a
sequence $\langle s_0, \ldots, s_{n-1} \rangle$ of strings, and a common
prefix denotes a common domain or a common folder for the given time
frame: we can retrieve access statistics using $\RankPrefix$ and
report the corresponding items by iterating $\SelectPrefix$
(e.g.\mbox{} ``what has been the most accessed domain during winter
vacation?''). This has a wide array of applications, from intrusion
detection and website optimization to database storage of telephone
calls.  Another interesting example arises in web graphs and social
networks, where a \emph{binary relation} is stored as a graph among
the entities, so that each edge is conceptually a pair of URLs or
hierarchical references (URIs). Edges can change over time, so we can
report what changed in the adjacency list of a given vertex in a given
time frame, allowing us to produce snapshots on the fly (e.g.\mbox{}
``how did friendship links change in that social network during winter
vacation?'').  In the above applications the many strings involved
require a suitable compressed format. Space-efficiency is highly
desirable as it translates directly into higher performance, since more
data can fit in fast memory.

\smallskip
\ttlpar{Compressed indexed sequence of strings.}
We introduce and study the problem of \emph{compressed indexed
  sequence of strings} representing indexed sequences of strings in
nearly-optimal compressed space, both in the static and dynamic
settings, while preserving provably good performance for the supported
operations. 

Traditionally, indexed sequences are stored by representing the
sequence explicitly and indexing it using auxiliary data structures,
such as B-Trees, Hash Indexes, Bitmap Indexes.  These data structures
have excellent performance and both external and cache-oblivious
variants are well studied \cite{vitter2008algorithms}. Space efficiency is however sacrificed: the
total occupancy is several times the space of the sequence alone. In a
latency constrained world where more and more data have to be kept in
internal memory, this is not feasible anymore.

The field of succinct and compressed data structures comes to aid:
there is a vast literature about compressed storage of sequences,
under the name of $\Rank$/$\Select$ sequences \cite{jacobson89}. 
The existing $\Rank$/$\Select$ data structures, however, assume that
the alphabet from which the sequences are drawn is \emph{integer} and
\emph{contiguous}, i.e.\mbox{} each element of the sequence is just a
symbol in $\{1, \dots, \sigma\}$. Non-integer or non-contiguous
alphabets need to be mapped first to an integer range.  Letting
$\Sset$ denote the set of distinct strings in the sequence $S \equiv
\langle s_0, \ldots, s_{n-1} \rangle$, the representation of~$S$ as a
sequence of $n$ integers in $\{1, \dots, |\Sset|\}$ requires to map
each $s_i$ to its corresponding integer, thus introducing at least two
issues: $(a)$~once the mapping is computed, it cannot be changed,
which means that in dynamic operations the alphabet must be known in
advance; $(b)$~for string data the string structure is lost, hence no
prefix operations can be supported.  Issue~$(a)$ in particular rules
out applications in database storage, as the set of values of a
column (or even its cardinality) is very rarely known in
advance; similarly in text indexing a new document can contain
unseen words; in URL sequences, new URLs can be created at any moment.

\smallskip
\ttlpar{Wavelet Trie.}
We introduce a new data structure, the \emph{Wavelet Trie}, that
overcomes the previously mentioned issues. The Wavelet Trie is a
generalization for string sequences~$S$ of the Wavelet Tree
\cite{DBLP:conf/soda/GrossiGV03}, a compressed data structure for
sequences, where the shape of the tree is induced from the structure of
the string set $\Sset$ as in the Patricia Trie
\cite{DBLP:journals/jacm/Morrison68}. This enables efficient prefix
operations and the ability to grow or shrink the alphabet as values
are inserted or removed. We first present a static version of the
Wavelet Trie in Section~\ref{sec:wavelet-trie}. 
We then give an append-only dynamic version of the
Wavelet Trie, meaning that elements can be inserted only at the end---the
typical scenario of query logs and access logs---and a fully dynamic
version that is useful for database applications (see
Section~\ref{sec:dynamic-wavelet-tries}).

Our time bounds are reported in Table~\ref{tab:bounds}, and some
comments are in order. Recall that $S$ denotes the input sequence of strings
stored in the Wavelet Trie, and $\Sset$ is the set of distinct strings in
$S$.  For a string $s$ to be queried, let $h_s$ denote the number of
nodes traversed in the binary Patricia Tree storing $\Sset$ when $s$
is searched for. Observe that $h_s \leq |s| \log |\Sigma|$, where
$\Sigma$ is the alphabet of symbols from which $s$ is drawn, and $|s|
\log |\Sigma|$ is the length in bits of $s$ (while $|s|$ denotes its
number of symbols as usual). The cost for the queries on the static
and append-only versions of the Wavelet Trie is $O(|s| + h_s)$ time,
which is the \emph{same} cost as searching in the binary Patricia
Trie. Surprisingly, the cost of appending $s$ to $S$ is still $O(|s| +
h_s)$ time, which means that compressing and indexing a sequential log
on the fly is very efficient. The cost of the operations for the fully
dynamic version are also competitive, without the need of knowing the
alphabet in advance. This answers positively a question posed in
\cite{navarrodynamicbitvectors09} and
\cite{DBLP:journals/talg/MakinenN08}.

\begin{table}[htbp]
  \shrinktable
  \centering
  {
    \small
    \setlength{\tabcolsep}{1ex}
\begin{tabular}{lccccc}\toprule
  &  \textbf{Query} & \textbf{Append} & \textbf{Insert} & \textbf{Delete} & \textbf{Space (in bits)} \\
  \midrule
  {Static} & $O(|s| + h_s)$ & -- & -- & -- & $\LB + o(\tilde h n)$ \\
  {Append-only} & $O(|s| + h_s)$ & $O(|s| + h_s)$ & -- & -- & $\LB + \PT + o(\tilde h n)$ \\
  {Fully-dynamic} & $O(|s| + h_s\log n)$ & $O(|s| + h_s\log n)$ & $O(|s| + h_s\log n)$ & $O(|s| + h_s\log n)$${}^\dagger$ & $\LB + \PT + O(nH_0)$ \\
  \bottomrule
\end{tabular}
  }
  \caption{Bounds for the Wavelet Trie. 
    \emph{Query} is the cost of Access, Rank(Prefix), Select(Prefix),
    $\LB$ is the information theoretic lower bound $\LT + nH_0$
    (Sect.~\ref{sec:wavelet-trie}), and
    $\PT$ the space taken by the dynamic Patricia Trie
    (Sect.~\ref{sec:dynamic-wavelet-tries}). ${}^\dagger$Note that deletion may
    take $O(\hat \ell + h_s \log n)$ time when deleting the last occurrence of a string, where $\hat \ell$ is the length of the longest string in
    $\Sset$. }
  \label{tab:bounds}
\end{table}

All versions are nearly optimal in space as shown in
Table~\ref{tab:bounds}. In particular, the lower bound $\LB(S)$ for
storing an indexed sequence of strings can be derived from the lower
bound $\LT(\Sset)$ for storing $\Sset$ given in \cite{pods08} plus
Shannon classical zero-order entropy bound $n H_0(S)$ for storing $S$ 
as a sequence of symbols. The static version uses an additional number of
bits that is just a lower order term $o({\tilde h} n)$, where $\tilde
h$ is the average height of the Wavelet Tree
(Definition~\ref{def:avgheight}). The append-only version only adds
$\PT(\Sset) = O(|\Sset| \log n)$ bits for keeping $O(|\Sset|)$ pointers
to the dynamically allocated memory (assuming that we do not have
control on the memory allocator on the machine). The fully dynamic
version has a redundancy of $O(n H_0(S))$ bits.

\smallskip
\ttlpar{Results.}
Summing up the above contributions: we address a new problem on
sequences of strings that is meaningful in real-life applications; we
introduce a new compressed data structure, the Wavelet Trie, and
analyze its nearly optimal space; we show that the supported
operations are competitive with those of uncompressed data structures,
both in the static and dynamic setting. We have further findings in
this paper.  In case the prefix operations are not needed (for example
when the values are numeric), we show in Section~\ref{sec:probabilistic-balanced-dynamic-wt} 
how to use a Wavelet Trie
to maintain a probabilistically balanced Wavelet Tree, hence
guaranteeing access times logarithmic in the alphabet size. Again, the
alphabet does not need to be known in advance. We also present an
append-only compressed bitvector that supports constant-time
$\Rank$, $\Select$, and $\Append$ in nearly optimal
space. We use this bitvector in the append-only Wavelet Trie.

\smallskip
\ttlpar{Related work.}
While there has been extensive work on compressed representations for
\emph{sets} of strings, to the best of our knowledge the problem of
representing \emph{sequences} of strings has not been studied. Indexed
sequences of strings are usually stored in the following ways:
\emph{(1)}~by mapping the strings to integers through a dictionary,
the problem is reduced to the storage of a sequence of integers;
\emph{(2)}~by concatenating the strings with a separator, and
compressing and full-text indexing the obtained string; \emph{(3)}~by
storing the concatenation $(s_i, i)$ in a string dictionary such as a
B-Tree.

The approach in \emph{(1)}, used implicitly in
\cite{CNspire08.1,DBLP:journals/jacm/FerraginaLMM09} and most
literature about $\Rank$/$\Select$ sequences, sacrifices the ability
to perform prefix queries. If the mapping preserves the lexicographic
ordering, prefixes are mapped to contiguous ranges; this enables some
prefix operations, by exploiting the two-dimensional nature of the
Wavelet Tree: $\RankPrefix$ can be reduced to the $\RangeCount$
operation described in \cite{MakinenN06}. To the best of our
knowledge, however, even with a lexicographic mapping there is no way
to support efficiently $\SelectPrefix$. More importantly, in the
dynamic setting it is not possible to change the alphabet (the
underlying string set~$\Sset$) without rebuilding the tree, as
previously discussed.

The approach in \emph{(2)}, called \emph{Dynamic Text
  Collection} in \cite{DBLP:journals/talg/MakinenN08}, although it
allows for potentially more powerful operations, is both slower, because
it needs a search in the compressed text index,
and less space-efficient,
as it only compresses according to the $k$-order entropy of the
string, failing to exploit the redundancy given by repeated strings.

The approach
in \emph{(3)}, used often in databases to implement indexes, only
supports $\Select$, while another copy of the sequence is still needed
to support $\Access$, and it does not support $\Rank$. Furthermore, it
offers little or no guaranteed compression ratio.

\section{Preliminaries}\label{sec:preliminaries}

\ttlpar{Information-theoretic lower bounds.}
We assume that all the logarithms are in base $2$, and that the word size is
$w \geq \log n$ bits. 
Let $s = c_1 \dots c_n \in \Sigma^*$ be a sequence of length $|s| =
n$, drawn from an alphabet $\Sigma$.  The \emph{binary representation}
of $s$ is a binary sequence of $n \lceil \log |\Sigma| \rceil$ bits,
where each symbol $c_i$ is replaced by the $\lceil \log |\Sigma|
\rceil$ bits encoding it. The \emph{zero-order empirical entropy} of
$s$ is defined as $H_0(s) = -\sum_{c \in \Sigma} \frac{n_c}{n}
\log\frac{n_c}{n}$, where $n_c$ is the number of occurrences of symbol
$c$ in $s$. Note that $n H_0(s) \leq n \log |\Sigma|$ is a lower bound
on the bits needed to represent $s$ with an encoder that does not exploit context information.  If
$s$ is a binary sequence with $\Sigma = \{ \bitzero, \bitone \}$ and $p$
is the fraction of $\bitone$s in $s$, we can rewrite the entropy as
$H_0(s) = - p \log p - (1 - p) \log (1 - p)$, which we also denote by
$H(p)$.
We use $\logbinom(m, n)$ as a shorthand for $\lceil \log {n \choose m}
\rceil$, the information-theoretic lower bound in bits for storing a
set of $m$ elements drawn from an universe of size $n$. We implicitly
make extensive use of the bounds $\logbinom(m, n) \leq
nH(\frac{m}{n}) + O(1)$, and $\logbinom(m, n) \leq m \log(\frac{n}{m}) +
O(m)$.

\smallskip
\ttlpar{Bitvectors and FIDs.}
Binary sequences, i.e. $\Sigma = \{\bitzero, \bitone\}$, are also
called \emph{bitvectors}, and data structures that encode a bitvector
while supporting $\Access$/$\Rank$/$\Select$ are also called
\emph{Fully Indexed Dictionaries}, or \emph{FID}s \cite{RRR07}. The
representation of \cite{RRR07}, referred to as RRR, can encode a
bitvector with $n$ bits, of which $m$ $\bitone$s, in $\logbinom(m, n) + O((n
\log \log n)/\log n)$ bits, while supporting all the operations in
constant time. 

\smallskip
\ttlpar{Wavelet Trees.}
The Wavelet Tree, introduced in \cite{DBLP:conf/soda/GrossiGV03}, is
the first data structure to extend $\Rank$/$\Select$ operations from
bitvectors to sequences on an arbitrary alphabet $\Sigma$, while
keeping the sequence compressed. Wavelet Trees reduce the problem to
the storage of a set of $|\Sigma|-1$ bitvectors organized in a tree
structure.

The alphabet is recursively partitioned in two subsets, until each
subset is a singleton (hence the leaves are in one-to-one correspondence with the
symbols of $\Sigma$).
The bitvector $\beta$ at the root has one bit for each element of the
sequence, where $\beta_i$ is $\bitzero$/$\bitone$ if the $i$-th
element belongs to the left/right subset of the alphabet. The sequence
is then projected on the two subsets, obtaining two subsequences, and
the process is repeated on the left and right
subtrees. An example is shown in Figure~\ref{fig:wavelettree}.

Note that the $\bitzero$s of one node are in one-to-one
correspondence with the bits of the left node, while the $\bitone$s
are in correspondence with the bits of the right node, and the
correspondence is given downwards by $\Rank$ and upwards by
$\Select$. Thanks to this mapping, it is possible to perform $\Access$
and $\Rank$ by traversing the tree top-down, and $\Select$ by
traversing it bottom-up.

By using RRR bitvectors, the space is $n H_0(S) + o(n \log
|\Sigma|)$ bits, while operations take $O(\log |\Sigma|)$ time.

\smallskip

\ttlpar{Patricia Tries.} 
The
\emph{Patricia Trie} \cite{DBLP:journals/jacm/Morrison68} (or
\emph{compacted binary trie}) of a non-empty prefix-free set of binary
strings is a binary tree built recursively as follows.  $(i)$~The
Patricia Trie of a single string is a node labeled with the string.
$(ii)$~For a nonempty string set $\mathcal{S}$, let $\alpha$ be the
longest common prefix of $\mathcal{S}$ (possibly the empty
string). Let $\mathcal{S}_b = \{\gamma | \alpha b\gamma \in
\mathcal{S}\}$ for $b \in \{\bitzero, \bitone\}$.  Then the Patricia
trie of $\mathcal{S}$ is the tree whose root is labeled with $\alpha$
and whose children (respectively labeled with $\bitzero$ and
$\bitone$) are the Patricia Tries of the sets $\mathcal{S}_\bitzero$
and $\mathcal{S}_\bitone$.
Unless otherwise specified, we use \emph{trie} to indicate a
\emph{Patricia Trie}, and we focus on binary strings.

\section{The Wavelet Trie}\label{sec:wavelet-trie}

\begin{figure}[t]
  \centering
  \begin{minipage}[t]{7.5cm}
    \begin{center}
\begin{tikzpicture}[
  scale=0.8, transform shape,
  intnode/.style={rectangle, draw=black, fill=none, thick, anchor=north},
  leaf/.style={rectangle, draw=black, fill=none, color=black, minimum width=1.5cm, minimum height=0.7cm, thick, anchor=north},
  child anchor=north,
  edge from parent/.style={draw=black, thick},
  level distance=1.1cm,
  level 1/.style={sibling distance=4.5cm},
  level 2/.style={sibling distance=2.2cm},
  level 3/.style={sibling distance=1.8cm}
]
  \node [intnode] {
    $\begin{array}{c}
      \bitnobf{abracadabra}\\
      \bit{00101010010}
    \end{array}$
  }
  child {
    node [intnode] {
      $\begin{array}{c}
        \bitnobf{abaaaba}\\
        \bit{0100010}
      \end{array}$
    }
    child {
      node [leaf] {\bitnobf{a}}
      edge from parent node[above left, anchor=340] {$\{\bitnobf{a}\}$}
    }
    child {
      node [leaf] {\bitnobf{b}}
      edge from parent node[above right, anchor=200] {$\{\bitnobf{b}\}$}
    }
    edge from parent node[above left, anchor=340] {$\{\bitnobf{a}, \bitnobf{b}\}$}
  } 
  child {
    node [intnode] {
      $\begin{array}{c}
        \bitnobf{rcdr}\\
        \bit{1011}
      \end{array}$
    }
    child {
      node [leaf] {\bitnobf{c}}
      edge from parent node[above left, anchor=340] {$\{\bitnobf{c}\}$}
    }
    child {
      node [intnode] {
        $\begin{array}{c}
          \bitnobf{rdr}\\
          \bit{101}
        \end{array}$
      }
      child {
        node [leaf] {\bitnobf{d}}
        edge from parent node[above left, anchor=340] {$\{\bitnobf{d}\}$}
      }
      child {
        node [leaf] {\bitnobf{r}}
        edge from parent node[above right, anchor=200] {$\{\bitnobf{r}\}$}
      }
      edge from parent node[above right, anchor=200] {$\{\bitnobf{d}, \bitnobf{r}\}$}
    } 
    edge from parent node[above right, anchor=200] {$\{\bitnobf{c}, \bitnobf{d}, \bitnobf{r}\}$}
  } 
  ;
\end{tikzpicture}
      \caption{A Wavelet Tree for the input sequence \bit{abracadabra} from
        the alphabet $\{\bit{a}, \bit{b}, \bit{c}, \bit{d},
        \bit{r}\}$.}
      \label{fig:wavelettree}
    \end{center}
  \end{minipage}
  \hspace{0.5cm}
  \begin{minipage}[t]{7.5cm}
    \begin{center}
\begin{tikzpicture}[
  scale=0.8, transform shape,
  intnode/.style={rectangle, draw=black, fill=none, thick, anchor=north},
  leaf/.style={rectangle, draw=black, fill=none, color=black, minimum width=1.5cm, minimum height=0.7cm, thick, anchor=north},
  child anchor=north,
  edge from parent/.style={draw=black, thick},
  level distance=1cm,
  level 1/.style={sibling distance=3.8cm},
  level 2/.style={sibling distance=2.8cm},
  level 3/.style={sibling distance=2cm}
]
  \node [intnode] {
    $\begin{array}{c}
      \alpha: \bit{0}\\
      \beta: \bit{0010101}\\
    \end{array}$
  }
  child {
    node [intnode] {
      $\begin{array}{c}
        \alpha: \varepsilon\\
        \beta: \bit{0111}\\
      \end{array}$
    }
    child {
      node [leaf] {$\alpha: \bit{1}$}
    }
    child {
      node [intnode] {
        $\begin{array}{c}
          \alpha: \varepsilon\\
          \beta: \bit{100}\\
        \end{array}$
      }
      child {
        node [leaf] {$\alpha: \varepsilon$}
      }
      child {
        node [leaf] {$\alpha: \bit{00}$}
      }
    } 
  } 
  child {
    node [leaf] {$\alpha: \bit{00}$}
  } 
  ;
\end{tikzpicture}
      \caption{The Wavelet Trie of the sequence of strings
        $\langle \bit{0001}, \bit{0011}, \bit{0100}, \bit{00100}, \bit{0100},
        \bit{00100}, \bit{0100} \rangle$.
      }
      \label{fig:wavelettrie}
    \end{center}
  \end{minipage}
\end{figure}

We informally define the \emph{Wavelet Trie} of a sequence of binary
strings $S$ as a Wavelet Tree on $S$ (seen as a sequence on the
alphabet $\Sigma = \Sset$) whose tree structure is given by the
Patricia Trie of $\Sset$. We focus on binary strings without loss of
generality, since strings from larger alphabets can be binarized as
described in Section~\ref{sec:preliminaries}. Likewise, we can assume
that $\Sset$ is prefix-free, as any set of strings can be made
prefix-free by appending a terminator symbol to each string.

A formal definition of the Wavelet Trie can be given along the lines
of the Patricia Trie definition of Section~\ref{sec:preliminaries}.

\begin{definition}\label{def:wavelettrie}
  Let $S$ be a non-empty sequence of binary strings, $S \equiv
  \langle s_0,\dots,s_{n-1}\rangle$, $s_i \in \{\bitzero, \bitone\}^*$, whose
  underlying string set $\Sset$ is prefix-free. The \emph{Wavelet
    Trie} of $S$, denoted $\WT(S)$, is built recursively
  as follows:

$(i)$~If the sequence is constant, i.e. $s_i = \alpha$ for all $i$,
    the Wavelet Trie is a node labeled with $\alpha$.

    $(ii)$~Otherwise, let $\alpha$ be the longest common prefix of
    $S$. For any $0 \leq i < n$ we can write $s_i = \alpha b_i
    \gamma_i$, where $b_i$ is a single bit. For $b \in \{\bitzero,
    \bitone\}$ we can then define two sequences $S_b = \langle
    \gamma_i | b_i = b \rangle$, and the bitvector $\beta = \langle
    b_i\rangle$; in other words, $S$ is partitioned in the two
    subsequences depending on whether the string begins with
    $\alpha\bitzero$ or $\alpha\bitone$, the remaining suffixes form
    the two sequences $S_\bitzero$ and $S_\bitone$, and the bitvector
    $\beta$ discriminates whether the suffix $\gamma_i$ is in
    $S_\bitzero$ or $S_\bitone$. Then the Wavelet Trie of $S$ is the
    tree whose root is labeled with $\alpha$ and $\beta$, and whose
    children (respectively labeled with $\bitzero$ and $\bitone$) are
    the Wavelet Tries of the sequences $S_\bitzero$ and $S_\bitone$.
\end{definition}

An example is shown in Fig.~\ref{fig:wavelettrie}. Note that leaves
are labeled only with the common prefix $\alpha$ while internal nodes
are labeled both with $\alpha$ and the bitvector $\beta$.
The Wavelet Trie is a generalization of the Wavelet Tree on $S$: each
node splits the underlying string set $\Sset$ in two subsets and a
bitvector is used to tell which elements of the sequence belong to
which subset. 
Using the same algorithms in \cite{DBLP:conf/soda/GrossiGV03} we obtain the following.
\begin{lemma}
  The Wavelet Trie supports $\Access$, $\Rank$, and $\Select$ operations. In
  particular, if $h_s$ is the number of internal nodes in the
  root-to-node path representing $s$ in $\WT(S)$,
  $\Access(\pos)$ performs $O(h_s)$ $\Rank$ operations on
    the bitvectors, where $s$ is the resulting string;
  $\Rank(s, \pos)$ performs $O(h_s)$ $\Rank$ operations on
    the bitvectors;
  $\Select(s, \occidx)$ performs $O(h_s)$ $\Select$ operations on
    the bitvectors.
\end{lemma}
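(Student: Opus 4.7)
The plan is to transport the three Wavelet Tree algorithms of \cite{DBLP:conf/soda/GrossiGV03} essentially verbatim, exploiting the fact that the trie inherits the key correspondence property of the Wavelet Tree: at every internal node with bitvector $\beta$, the $i$-th occurrence of bit $b$ in $\beta$ is in one-to-one order-preserving correspondence with the $i$-th element of the child sequence $S_b$. Hence $\Rank_b$ on $\beta$ maps a position downward into the child subsequence, and $\Select_b$ maps it upward, which is the only bitvector arithmetic the Wavelet Tree algorithms require. This correspondence is immediate from Definition~\ref{def:wavelettrie}, since the projection producing $S_\bitzero$ and $S_\bitone$ preserves the relative order of the positions where $\beta$ equals $\bitzero$ and $\bitone$ respectively.

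Concretely, for $\Access(\pos)$ I would start at the root and, at each internal node, append the label $\alpha$, read $b = \beta[\pos]$, append $b$, update $\pos \leftarrow \Rank_b(\pos, \beta)$, and descend into child $b$; at the leaf reached I append its $\alpha$ and return. For $\Rank(s, \pos)$ I would descend guided by $s$: at each node I consume the label $\alpha$ from the current suffix of $s$, branch on the next bit $b$ of $s$, and update $\pos \leftarrow \Rank_b(\pos, \beta)$; the final $\pos$ at the leaf whose $\alpha$ matches the remaining suffix of $s$ is the answer, and the search returns $0$ if any label disagrees. For $\Select(s, \occidx)$ I would first locate $s$'s leaf by a bitvector-free descent of the underlying Patricia Trie, then climb upward, replacing $\occidx$ by $\Select_b(\occidx, \beta)$ at each parent, where $b$ is the bit of the edge from the parent.

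Each procedure touches exactly the $h_s$ internal nodes on the root-to-leaf path representing $s$ and performs $O(1)$ $\Rank$ or $\Select$ operations per node on the associated bitvector, which gives the claimed bounds. I do not anticipate any real obstacle: the only difference from the Wavelet Tree is the presence of the common-prefix labels $\alpha$, and these only affect how string contents are read or matched; they leave the downward/upward position-mapping invariant unchanged, so the $O(h_s)$ count of bitvector operations follows by exactly the same induction on depth used in \cite{DBLP:conf/soda/GrossiGV03}.
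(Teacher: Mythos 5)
Your proposal is correct and follows exactly the route the paper takes: the paper's justification is simply to invoke the Wavelet Tree algorithms of \cite{DBLP:conf/soda/GrossiGV03}, relying on the same downward-by-$\Rank$ / upward-by-$\Select$ correspondence between a node's bitvector and its children's subsequences that you make explicit. Your elaboration of the three traversals and the $O(1)$ bitvector operations per internal node on the path is a faithful (and slightly more detailed) rendering of that argument.
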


It is interesting to note that any Wavelet Tree can be seen as a
Wavelet Trie through a specific mapping of the alphabet to binary
strings. For example the classic balanced Wavelet Tree can be obtained
by mapping each element of the alphabet to a distinct string of
$\lceil \log \sigma \rceil$ bits; another popular variant is the
Huffman-tree shaped Wavelet Tree, which can be obtained as a Wavelet
Trie by mapping each symbol to its Huffman code.

\smallskip
\ttlpar{Prefix operations.} It follows immediately from
Definition~\ref{def:wavelettrie} that for any prefix $p$ occurring in
at least one element of the sequence, the subsequence of strings
starting with $p$ is represented by a subtree of $\WT(S)$.

This simple property allows us to support two new operations,
$\RankPrefix$ and $\SelectPrefix$, as defined in the introduction.
The implementation is identical to $\Rank$ and $\Select$, with the
following modifications: if $n_p$ is the node obtained by
prefix-searching $p$ in the trie, for $\RankPrefix$ the top-down
traversal stops at $n_p$; for $\SelectPrefix$ the bottom-up traversal
starts at $n_p$. This proves the following lemma.

\begin{lemma}
  Let $p$ be a prefix occurring in the sequence $S$. Then
  $\RankPrefix(p, \pos)$ performs $O(h_p)$ $\Rank$ operations on
    the bitvectors, and
    $\SelectPrefix(p, \occidx)$ performs $O(h_p)$ $\Select$ operations on
    the bitvectors.
\end{lemma}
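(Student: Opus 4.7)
The plan is to reduce both prefix operations to the ordinary $\Rank$/$\Select$ procedures on the Wavelet Trie, truncating (for $\RankPrefix$) or starting (for $\SelectPrefix$) the traversal at the locus $n_p$ of the prefix $p$ in the underlying Patricia Trie, rather than at a leaf. The key structural fact I would establish first is the invariant noted in the prose right before the lemma: if $n_p$ is the (unique) node of $\WT(S)$ reached by the Patricia-trie search for $p$, then the subsequence of $S$ consisting of the strings that begin with $p$ is precisely the sequence stored in the subtree rooted at $n_p$. This I would prove by induction on the length of the root-to-$n_p$ path, using the recursive construction in Definition~\ref{def:wavelettrie}: at each internal node on this path, labeled $(\alpha,\beta)$, the strings having the current working prefix agree on $\alpha$, and the bit of $p$ immediately following $\alpha$ forces us down the corresponding child, so the filtered subsequence matches the child's sequence by definition of $S_\bitzero$ and $S_\bitone$.

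For $\RankPrefix(p,\pos)$, I would then simulate the top-down $\Rank$ traversal of the standard Wavelet Trie algorithm. At the root I start with the position $\pos_0 = \pos$; at an internal node with bitvector $\beta$, if the child on the path to $n_p$ is the $b$-child I replace the current position by $\Rank_b(\beta,\pos_i)$ and descend. By the invariant above, $\pos_i$ at every step equals the number of strings in the subsequence of the current node whose original index was less than $\pos$. Stopping the descent the moment we reach $n_p$ (instead of continuing to a leaf) therefore yields exactly the count of strings of $S$ with prefix $p$ occurring at positions less than $\pos$. The path from the root to $n_p$ has $h_p$ internal nodes by definition, so this uses $O(h_p)$ $\Rank$ calls on the bitvectors.

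For $\SelectPrefix(p,\occidx)$, I would run the dual bottom-up $\Select$ procedure but initialize it at $n_p$ with the value $\occidx_0 = \occidx$ rather than at a leaf. At each step, moving from a node that is the $b$-child of its parent to the parent itself with bitvector $\beta$, we update $\occidx_{i+1} = \Select_b(\beta,\occidx_i)$. By the same invariant, after climbing to the root we obtain the position in $S$ of the $\occidx$-th string whose prefix is $p$. The bottom-up path again has $h_p$ internal nodes, so $O(h_p)$ $\Select$ operations suffice.

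The only non-routine step is really the invariant characterizing the subtree rooted at $n_p$; once it is in place, correctness and the $O(h_p)$ bound follow from the analogous statements already given for $\Access$, $\Rank$, and $\Select$ in the preceding lemma, applied to the truncated portion of the traversal. Because Patricia-trie prefix search only commits to a direction when $p$ has supplied the discriminating bit, the case where $p$ ends strictly inside an edge label $\alpha$ is handled correctly: the traversal stops at the endpoint $n_p$ of that edge, which is what the invariant requires.
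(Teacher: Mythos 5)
Your proposal is correct and follows essentially the same route as the paper: the paper likewise observes that the subtree rooted at the locus $n_p$ represents exactly the subsequence of strings with prefix $p$, and then implements $\RankPrefix$ and $\SelectPrefix$ as the ordinary top-down $\Rank$ and bottom-up $\Select$ traversals truncated at (respectively started from) $n_p$, giving $O(h_p)$ bitvector operations. Your write-up merely makes explicit the inductive justification of the subtree invariant and the edge case where $p$ ends inside a label, which the paper leaves as immediate from Definition~\ref{def:wavelettrie}.
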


Note that, since $\Sset$ is prefix-free, $\Rank$ and $\Select$ on any
string in $\Sset$ are equivalent to $\RankPrefix$ and $\SelectPrefix$,
hence it is sufficient to implement these two operations.

\ttlpar{Average height.} To analyze the space occupied by the Wavelet
Trie, we define the \emph{average height}.

\begin{definition}\label{def:avgheight}
  The average height $\tilde h$ of a $\WT(S)$ is defined as $\tilde h = \frac{1}{n} \sum_{i=0}^{n-1} h_{s_i}$.
\end{definition}

Note that the average is taken on the \emph{sequence}, not on the set
of distinct values. Hence we have $\tilde h n \leq \sum_{i=0}^{n-1}
|s_i|$ (i.e. the total input size), but we expect $\tilde h n \ll
\sum_{i=0}^{n-1} |s_i|$ in real situations, for example if short
strings are more frequent than long strings, or they have long
prefixes in common (exploiting the path compression of the Patricia
Trie). The quantity $\tilde h n$ is equal to the sum of the lengths of all the
bitvectors $\beta$, since each string $s_i$ contributes exactly one
bit to all the internal nodes in its root-to-leaf path. Also, the
root-to-leaf paths form a prefix-free encoding for $\Sset$, and their
concatenation for each element of $S$ is an order-zero encoding for
$S$, thus it cannot be smaller than the zero-order entropy of $S$, as
summarized in the following lemma.

\begin{lemma}\label{lem:hbound}
  Let $\tilde h$ be the average height of $WT(S)$. Then $H_0(S) \leq
  \tilde h \leq \frac{1}{n} \sum_{i=0}^{n-1} |s_i|$.
\end{lemma}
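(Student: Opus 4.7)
The plan is to prove both inequalities directly from the tree structure of $\WT(S)$, closely following the informal hint in the paragraph preceding the lemma: the upper bound is combinatorial, while the lower bound is a Kraft/Shannon argument applied to an explicit prefix-free code extracted from the trie.

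For the upper bound I would argue that the root-to-leaf path in $\WT(S)$ for any element $s_i$ consumes $s_i$ entirely. By Definition~\ref{def:wavelettrie}, at each of the $h_{s_i}$ internal nodes $v$ on this path we read the node label $\alpha_v$ followed by one branching bit, and at the leaf we read one final label $\alpha$. Hence $\sum_{v} |\alpha_v| + h_{s_i} = |s_i|$ and in particular $h_{s_i} \le |s_i|$. Averaging over $i$ yields $\tilde h \le \frac{1}{n}\sum_i |s_i|$.

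For the lower bound I would construct an explicit prefix-free binary code for $\Sset$ from the shape of $\WT(S)$ and invoke Shannon's entropy bound. Specifically, define $c: \Sset \to \{\bitzero,\bitone\}^*$ by letting $c(s)$ be the string of branching bits chosen along the unique root-to-leaf path to the leaf representing $s$ in $\WT(S)$. Since each internal node on that path contributes exactly one branching bit, $|c(s)| = h_s$. The map $c$ is prefix-free because no leaf of a binary tree can be an ancestor of another leaf, so the codeword lengths satisfy Kraft's inequality $\sum_{s \in \Sset} 2^{-h_s} \le 1$. Encoding $S$ symbol by symbol via $c$ produces a zero-order encoding of total length $\sum_{i=0}^{n-1} h_{s_i} = \sum_{s \in \Sset} n_s h_s = \tilde h n$ bits, where $n_s$ denotes the multiplicity of $s$ in $S$. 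Shannon's source coding theorem applied to any such prefix-free code gives $\tilde h n \ge n H_0(S)$, hence $\tilde h \ge H_0(S)$.

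The only mildly nontrivial step is verifying that $|c(s)| = h_s$ and that the image $c(\Sset)$ really is prefix-free; both are immediate from the binary-tree shape of $\WT(S)$ (which matches the Patricia Trie of $\Sset$) together with the definition of $h_s$ as the number of internal nodes on the root-to-leaf path for $s$. Once these two structural facts are in place, the two inequalities follow with no further computation.
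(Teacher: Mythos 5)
Your proof is correct and matches the paper's own argument, which appears in the paragraph preceding the lemma: the upper bound comes from each internal node on the path for $s_i$ consuming at least one (branching) bit of $s_i$, and the lower bound from observing that the branching bits form a prefix-free code of length $h_s$ per string, so Shannon's bound forces $\tilde h n \geq nH_0(S)$. (Only a cosmetic nit: your identity $\sum_v |\alpha_v| + h_{s_i} = |s_i|$ should also include the leaf's label on the left-hand side, but the inequality $h_{s_i} \leq |s_i|$ you actually need is unaffected.)
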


\ttlpar{Static succinct representation.} Our first representation of
the Wavelet Trie is static. We show how by using suitable succinct data
structures the space can be made very close to the information theoretic
lower bound.

To store the Wavelet Trie we need to store its two components: the
underlying Patricia Trie and the bitvectors in the internal nodes. 

We represent the trie using a DFUDS \cite{dfuds} encoding, which
encodes a tree with $k$ nodes in $2k + o(k)$ bits, while supporting
navigational operations in constant time. Since the internal nodes in
the tree underlying the Patricia Trie have exactly two children, we
compute the corresponding tree in the first-child/next-sibling
representation. This brings down the number of nodes from $2|\Sset| -
1$ to $|\Sset|$, while preserving the operations. Hence we can encode
the tree structure in $2|\Sset| + o(|\Sset|)$ bits. If we denote the
number of trie edges as $e = 2(|\Sset| - 1)$, the space can be written
as $e + o(|\Sset|)$.

The $e$ labels $\alpha$ of the nodes are concatenated in depth-first
order in a single bitvector $L$. We use the partial sum data structure
of \cite{RRR07} to delimit the labels in $L$. This adds $\logbinom(e,
|L| + e) + o(|\Sset|)$ bits. The total space (in bits) occupied by the
trie structure is hence $|L| + e + \logbinom(e, |L| + e) +
o(|\Sset|)$.

We now recast the lower bound in \cite{pods08} using our notation,
specializing it for the case of binary strings.
\begin{theorem}[\cite{pods08}]
  \label{thm:ltbound}
  For a prefix-free string set $\Sset$, the information-theoretic
  lower bound $\LT(\Sset)$ for encoding $\Sset$ is given by $\LT(\Sset) =
  |L| + e + \logbinom(e, |L| + e)$, where $L$ is the bitvector
  containing the $e$ labels $\alpha$ of the nodes concatenated in
  depth-first order.
\end{theorem}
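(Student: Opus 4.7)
The plan is to reduce this to the lower bound established in \cite{pods08} by specializing its counting argument to the binary alphabet. The claim is that $\LT(\Sset)$ is (up to lower-order terms) the logarithm of the number of distinct prefix-free binary string sets sharing the structural parameters $(e, |L|)$, and the goal is to decompose this count into three independent combinatorial factors whose logarithms are exactly $e$, $\logbinom(e, |L|+e)$, and $|L|$.

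First I would observe that a prefix-free binary string set $\Sset$ is in bijection with its Patricia Trie once we fix the canonical convention that the $\bitzero$-labeled edge goes to the left child (this is forced, since the two edge labels $\bitzero,\bitone$ are distinct). Hence the number of distinct $\Sset$ realizing given parameters $(e, |L|)$ equals the number of distinct Patricia Tries with $e$ edges and node-label concatenation of length $|L|$, and the $\log_2$ of this count is a lower bound on the bits any encoder must use in the worst case.

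Next I would decompose every such Patricia Trie into three independent pieces: (i) the underlying full binary tree shape with $e/2 + 1$ leaves, for which there are $C_{e/2}$ choices (Catalan number), contributing $\log C_{e/2} = e - O(\log e)$ bits; (ii) the composition of the total label length $|L|$ among the $e$ node-label slots (equivalently, where the label boundaries fall inside the concatenated bitvector), for which there are $\binom{|L|+e-1}{e-1}$ choices, contributing $\logbinom(e, |L|+e)$ bits up to lower-order terms; and (iii) the actual bits filling the $|L|$ label positions, contributing exactly $|L|$ bits. Multiplying and taking logs yields $|L| + e + \logbinom(e, |L|+e)$.

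The subtle step, and the one I expect to be the main obstacle, is step (iii): checking that every tree-shape/length-composition/bit-content triple corresponds to a \emph{valid} Patricia Trie, i.e.\ one where each internal node's $\alpha$ is the longest common prefix of its descendant leaves. The key observation making the three factors truly independent is that the first bit following each $\alpha$ on a child branch is already determined by the edge label $\bitzero$ or $\bitone$ and is \emph{not} stored in any $\alpha$; consequently the bits inside each $\alpha$ can be chosen freely, and no two internal nodes share a forced bit, so the longest-common-prefix property is automatic. Combining this with the matching upper bound (the explicit encoding sketched in the paragraphs preceding the theorem: a DFUDS-based $e + o(|\Sset|)$-bit tree, a partial-sum label-delimiter structure of $\logbinom(e,|L|+e) + o(|\Sset|)$ bits, and $|L|$ raw label bits) shows the formula is tight, completing the recasting of the bound from \cite{pods08} for binary strings.
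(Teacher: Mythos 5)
The paper does not actually prove this statement: it is imported (modulo notation) from \cite{pods08}, so there is no internal proof to compare yours against. Your counting argument is nonetheless a sound, self-contained derivation, and it mirrors the encoding-side decomposition the paper gives immediately before the theorem (tree shape, label delimiters, raw label bits). The bijection between prefix-free binary sets and labeled Patricia Tries is correct, and your key observation in step (iii) --- that the branching bit on each edge is implicit in whether one descends to the $\bitzero$- or $\bitone$-child and is never stored inside any $\alpha$, so every (shape, length-composition, bit-content) triple yields a trie in which each internal node's $\alpha$ is automatically the LCP of its subtree --- is exactly what makes the three factors independent; injectivity of the map back to string sets follows from uniqueness of the Patricia Trie. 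Two small points, neither fatal. First, your composition count is $\binom{|L|+e-1}{e-1}$ whereas the stated term is $\logbinom(e,|L|+e)=\lceil\log\binom{|L|+e}{e}\rceil$; these differ by $\log\frac{|L|+e}{e}$, which together with the $O(\log e)$ slack in $\log C_{e/2}$ means your argument establishes the formula only up to additive lower-order terms. That is how such information-theoretic bounds are meant to be read, but you should state it explicitly rather than assert the exact identity. Second, the trie has $2|\Sset|-1$ nodes while only $e=2|\Sset|-2$ labels appear in $L$, so the root label (the global LCP) is unaccounted for in the recast statement; your proof inherits this wrinkle from the theorem rather than introducing it, but it is worth a remark if you want the bijection to be airtight.
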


It follows immediately that the trie space is just the lower
bound $\LT$ plus a negligible overhead.

It remains to encode the bitvectors $\beta$. We use the RRR encoding,
which takes $|\beta|H_0(\beta) + o(|\beta|)$ to compress the bitvector
$\beta$ and supports constant-time $\Rank$/$\Select$ operations. In
\cite{DBLP:conf/soda/GrossiGV03} it is shown that, regardless of the
shape of the tree, the sum of the entropies of the bitvectors
$\beta$'s add up to the total entropy of the sequence, $nH_0(S)$, plus
negligible terms.

With respect to the redundancy beyond $nH_0(S)$, however, we cannot assume that
$|\Sset| = o(n)$ and that the tree is balanced, as in
\cite{DBLP:conf/soda/GrossiGV03} and most Wavelet Tree literature; in
our applications, it is well possible that $|\Sset| = \Theta(n)$, so a
more careful analysis is needed. In Appendix~\ref{sec:multipleRRR},
Lemma~\ref{lem:redundancywt} we show that in the general case the
redundancy add up to $o(\tilde h n)$ bits.

We concatenate the RRR encodings of the bitvectors, and use again the
partial sum structure of \cite{RRR07} to delimit the encodings, with
an additional space occupancy of $o(\tilde h n)$. The bound is proven
in Appendix~\ref{sec:multipleRRR}, Lemma~\ref{lem:rrrdelim}. Overall, 
the set of bitvectors occupies $nH_0(S) + o(\tilde h n)$ bits.

All the operations can be supported with a trie traversal, which takes
$O(|s|)$ time, and $O(h_s)$ $\Rank$/$\Select$ operations on the
bitvectors. 
Since the bitvector operations are
constant time, all the operations take $O(|s| + h_s)$ time.
Putting together these observations, we obtain the following theorem.

\begin{theorem}\label{thm:wtstatic}
  The Wavelet Trie $WT(S)$ of a sequence of binary strings $S$ can be
  encoded in $\LT(\Sset) + nH_0(S) + o(\tilde h n)$ bits, while
  supporting the operations $\Access$, $\Rank$, $\Select$,
  $\RankPrefix$, and $\SelectPrefix$ on a string $s$ in $O(|s| + h_s)$ time.
\end{theorem}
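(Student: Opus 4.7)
The plan is to account separately for the two components of the encoding — the Patricia Trie of $\Sset$ and the collection of internal bitvectors $\beta$ — and then verify that the algorithms for $\Access$, $\Rank$, $\Select$, $\RankPrefix$, $\SelectPrefix$ yield the claimed $O(|s|+h_s)$ time whenever each bitvector admits constant-time $\Rank$/$\Select$.

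First, for the trie I would apply the first-child/next-sibling transformation, which takes the $2|\Sset|-1$ nodes of the binary Patricia Trie down to $|\Sset|$ while preserving navigation. A DFUDS encoding~\cite{dfuds} of the resulting tree occupies $2|\Sset|+o(|\Sset|)$ bits and supports all navigation in $O(1)$. Concatenating the $e=2(|\Sset|-1)$ edge labels $\alpha$ in depth-first order into a bitvector $L$ and delimiting them with the partial-sums structure of~\cite{RRR07} adds $|L|+\logbinom(e,|L|+e)+o(|\Sset|)$ bits, so by Theorem~\ref{thm:ltbound} the total trie cost is exactly $\LT(\Sset)+o(|\Sset|)$.

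Second, I would RRR-encode each internal bitvector $\beta$. A standard telescoping argument (implicit in \cite{DBLP:conf/soda/GrossiGV03}) shows $\sum_\beta |\beta|H_0(\beta)=nH_0(S)$ up to lower-order terms: each internal node's binary entropy records precisely the bit of information used to refine the alphabet partition at that node, and summing over root-to-leaf paths recovers the zero-order entropy of $S$. The delicate point — and the main obstacle — is the RRR redundancy plus the cost of delimiting the $|\Sset|-1$ concatenated encodings. Since we allow $|\Sset|=\Theta(n)$ and an arbitrarily unbalanced trie, we cannot merely invoke the textbook $o(|\beta|)$ bound on each $\beta$ and sum them up, as the standard Wavelet Tree analysis does. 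I would instead appeal to Lemma~\ref{lem:redundancywt} and Lemma~\ref{lem:rrrdelim}, which collectively bound both quantities by $o(\tilde h n)$, using the identity $\sum_\beta |\beta|=\tilde h n$ that is immediate from Definition~\ref{def:avgheight}.

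Combining the two accounts gives the space bound $\LT(\Sset)+nH_0(S)+o(\tilde h n)$. For the time bound, the query algorithms visit exactly the $h_s$ internal nodes along the root-to-target path, performing one $O(1)$ $\Rank$ or $\Select$ on the local RRR-encoded $\beta$ at each, contributing $O(h_s)$ in total; comparing $s$ against the concatenated labels in $L$ during the trie descent costs $O(|s|)$ in the word-RAM model. Adding yields $O(|s|+h_s)$ as claimed. Note that Lemma~\ref{lem:hbound} guarantees $\tilde h\geq H_0(S)$, so the redundancy term $o(\tilde h n)$ is genuinely lower-order relative to the leading $nH_0(S)$, which makes the overall bound meaningful.
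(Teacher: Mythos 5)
Your proposal is correct and follows essentially the same route as the paper: DFUDS plus the first-child/next-sibling transformation and the partial-sum-delimited label bitvector $L$ to hit $\LT(\Sset)$ up to negligible overhead, RRR on the $\beta$'s with the entropy telescoping to $nH_0(S)$, and the appendix lemmas (Lemma~\ref{lem:redundancywt}, Lemma~\ref{lem:rrrdelim}) to control the redundancy and delimiters by $o(\tilde h n)$ when $|\Sset|$ may be $\Theta(n)$. The only cosmetic difference is that the paper also invokes Lemma~\ref{lem:sigmanegligible} to absorb the residual $o(|\Sset|)$ terms into $o(\tilde h n)$, which you do implicitly.
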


Note that when the tree is balanced both time and space bounds are
basically equivalent to those of the standard Wavelet Tree. We remark
that the space upper bound in Theorem~\ref{thm:wtstatic} is just the information theoretic
lower bound $\LB(S) \equiv \LT(\Sset) + nH_0(S)$ plus an overhead
negligible in the input size. 

\section{Dynamic Wavelet Tries}\label{sec:dynamic-wavelet-tries}

In this section we show how to implement dynamic updates to the
Wavelet Trie, resulting in the first compressed dynamic sequence with dynamic
alphabet. This is the main contribution of the paper. 

Dynamic variants of Wavelet Trees have been presented recently
\cite{DBLP:journals/jcse/LeeP09, navarrodynamicbitvectors09,
  DBLP:journals/talg/MakinenN08}. They all assume that the alphabet is
known a priori, hence the tree structure is static. Under this
assumption it is sufficient to replace the bitvectors in the nodes
with \emph{dynamic bitvectors with indels}, bitvectors that support
the insertion of deletion of bits at arbitrary points. Insertion at
position $\pos$ can be performed by inserting $\bitzero$ or $\bitone$
at position $\pos$ of the root, whether the leaf corresponding to the
value to be inserted is on the left or right subtree. A $\Rank$
operation is used to find the new position $\pos'$ in the
corresponding child. The algorithm proceeds recursively until a leaf
is reached. Deletion is symmetric.

The same operations can be implemented on a Wavelet Trie. The novelty
consists in the ability of inserting strings that do not already occur in the
sequence, and of deleting the last occurrence of a string, in both
cases changing the alphabet $\Sset$ and thus the shape of the tree.
To do so we represent the underlying tree structure of the Wavelet
Trie with a dynamic Patricia Trie. We summarize the properties of a
dynamic Patricia Trie in the following lemma. The operations are
standard, but we describe them in
Appendix~\ref{sec:patriciaops} for completeness.

\begin{lemma}
  A dynamic Patricia Trie on $k$ binary strings occupies $O(kw) + |L|$
  bits, where $L$ is defined as in Theorem~\ref{thm:ltbound}. Besides
  the standard traversal operations in constant time, 
  insertion of a new string $s$ takes $O(|s|)$ time.
  Deletion of a string $s$ takes $O(\hat \ell)$ time, where $\hat
    \ell$ is the length of the longest string in the trie. %
\end{lemma}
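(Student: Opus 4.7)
My plan is to describe a standard pointer-based Patricia representation, account for its bits, and then spell out the textbook insert/delete procedures while being careful about what dominates each cost. For the representation, I store at every node a constant number of $w$-bit fields (parent pointer, first-child pointer, next-sibling pointer, and the bit length of the label) plus a separately allocated bitstring holding the $\alpha$ label of the node. Since a Patricia Trie on $k$ binary strings has at most $2k{-}1$ nodes, the headers sum to $O(kw)$ bits, and the stored $\alpha$'s contribute exactly $|L|$ bits in total, giving the claimed $O(kw) + |L|$ budget. Because every navigational move is a single pointer dereference, the standard traversal operations take $O(1)$ time.

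For insertion of a new string $s$, I carry out the classical blind Patricia descent: starting from the root, at each internal node I read the single discriminating bit of $s$ whose position is determined by the accumulated label lengths along the path, choose the appropriate child in $O(1)$, and repeat. This takes $O(h_s)$ steps and reaches a candidate leaf or a missing branch. I then verify by scanning $s$ against the $\alpha$ labels along the path; this touches at most $|s|+1$ bits of labels because the traversed labels are disjoint and fit inside a prefix of $s$ up to the first mismatch. As $\Sset$ is prefix-free and $s$ is new, a mismatch must exist; at the node where it lies I either split the $\alpha$ of that edge into a prefix and suffix (creating a fresh internal node) or attach directly, and in both cases I hang a new leaf whose $\alpha$ is the suffix of $s$ beyond the mismatch. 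All structural rewiring is $O(1)$ pointer updates, and the label surgery copies at most $|s|$ bits, so the total is $O(|s|)$.

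For deletion of an existing string $s$, I first locate the leaf by the same search-and-verify in $O(|s|)$ time. I then unlink the leaf; if its parent $v$ is left with a unique surviving child $u$, I contract the $v$ node by forming a new label $\alpha_v \cdot b_u \cdot \alpha_u$ (with $b_u$ the discriminating bit toward $u$) and making $u$ a child of $v$'s parent with this concatenated label. The length of the resulting label is bounded by a root-to-leaf path in the trie after the deletion, hence by $\hat\ell$, so materializing the concatenation takes $O(\hat\ell)$ time and dominates the overall cost.

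The main obstacle I anticipate is justifying why deletion cannot be driven down to $O(|s|)$: the merge must absorb the entire sibling label $\alpha_u$, whose length can be comparable to any other string stored in the trie, so any implementation that physically stores the merged label has to pay for copying those bits. A secondary subtlety, which I would handle by promptly freeing the two absorbed label buffers at the contraction step, is to preserve the $O(kw) + |L|$ space bound under arbitrary update sequences rather than letting fragmentation creep in. Once these two points are treated carefully, the remaining operations reduce to routine pointer manipulation.
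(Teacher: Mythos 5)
There is a genuine gap in your insertion analysis, and it sits exactly where the paper's representation does real work. When you split a node whose label is $\alpha$ at mismatch position $j$, the prefix $\alpha[0..j)$ is indeed a substring of $s$ and costs at most $|s|$ bits to materialize, but the \emph{suffix} $\alpha[j+1..)$ that must become the split node's new label is not: it belongs to the other strings already in the trie and can have length up to $\hat\ell$ independently of $|s|$ (insert the one-bit string $\bit{1}$ into a trie containing a single string $\bitzero^{10^6}$). Under your stated memory model --- ``a separately allocated bitstring holding the $\alpha$ label of the node'' --- producing that suffix as a fresh buffer forces you to copy $\Theta(\hat\ell)$ bits, so your claim that ``the label surgery copies at most $|s|$ bits'' is false and insertion degrades to $O(|s|+\hat\ell)$. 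The paper avoids this by never copying on a split: the new internal node's label is a pointer to the \emph{start} of the split node's old buffer with a shorter length, and the split node's label pointer is advanced past the mismatch within that same buffer, so the only allocation is the new leaf's suffix of $s$.

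If you instead adopt that shared-buffer representation to rescue the $O(|s|)$ insertion bound, your deletion procedure becomes incomplete: after many splits, a single allocated buffer (the string that created it) can be aliased by a whole chain of ancestor labels, so you cannot ``promptly free the two absorbed label buffers'' at the contraction step without dangling those ancestors or leaking the buffer. This is precisely why the paper's deletion first finds the \emph{highest} node sharing the deleted leaf's label buffer and rematerializes the concatenation of all labels from that node down to the merged node --- a chain of total length at most $\hat\ell$, which is where the $O(\hat\ell)$ deletion cost actually comes from, rather than from the single two-label merge you describe. In short, you need to commit to one label-storage model and make both bounds hold under it; as written, each of your two operations is analyzed under the model that is convenient for it.
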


\ttlpar{Updating the bitvectors.} 
Each internal node of the trie is augmented with a bitvector $\beta$,
as in the static Wavelet Trie. Inserting and deleting a
string induce the following changes on the bitvectors $\beta$s.

\smallskip 

$\Insert(s, \pos)$: If the string is not present, we insert it into
the Patricia Trie, causing the split of an existing node: a new
internal node and a new leaf are added. We initialize the bitvector in
the new internal node as a constant sequence of bits $b$ if the
split node is a $b$-labeled child of the new node; 
the length of the new bitvector is equal to the length of
the sequence represented by the split node (i.e.\mbox{} the number of
$b$ bits in the parent node if the split node is a $b$-labeled child). 
The algorithm then
follows as if the string was in the trie.  This operation is shown in
Figure~\ref{fig:wtsplit}.
Now we can assume the string is in the trie. Let prefix $\alpha$ and
bitvector $\beta$ be the labels in the root. Since the string is in
the trie, it must be in the form $\alpha b \gamma$, where $b$ is a
bit. We insert $b$ at position $\pos$ in $\beta$ and compute $\pos' =
\Rank(b, \pos)$ in $\beta$, and insert recursively $\gamma$ in the $b$-labeled
subtree of the root at position $\pos'$. We proceed until
we reach a leaf.

\smallskip

$\Delete(\pos)$: Let $\beta$ be the bitvector in the root. We first
find the bit corresponding to position $\pos$ in the bitvector, $b =
\Access(\pos)$ in $\beta$. Then we compute $\pos' = \Rank(b, \pos)$ in
$\beta$, and delete recursively the string at position $\pos'$ from
the $b$-labeled subtree. We then delete the bit at position $\pos$
from $\beta$.
We then check if the parent of the leaf node representing the string
has a constant bitvector; in this case the string deleted was the last
occurrence in the sequence. We can then delete the string from the
Patricia Trie, thus deleting an internal node (whose bitvector is now constant) and a leaf.

\begin{figure}[htbp]
  \centering
\begin{tikzpicture}[
  scale=0.8, transform shape,
  node distance=3cm,
  intnode/.style={rectangle, draw=black, fill=none, thick, anchor=north},
  leaf/.style={rectangle, draw=black, fill=none, color=black, minimum width=1.5cm, minimum height=0.7cm, thick, anchor=north},
  child anchor=north,
  edge from parent/.style={draw=black, thick},
  subtrie/.style={isosceles triangle, draw=black, shape border rotate=90,isosceles triangle stretches=true, minimum height=15mm,minimum width=12mm,inner sep=0, anchor=apex, thick, xshift=-12mm, yshift=7.5mm},
  level distance=1cm,
  level/.style={sibling distance=2.5cm}
]
  \node [intnode] (s1) {
    $\begin{array}{c}
      \alpha: \bitnobf{\dots}\\
      \beta: \bitnobf{10110\dots}\\
    \end{array}$
  }
  child {
    node [intnode] {
      $\begin{array}{c}
        \alpha: \gamma \bitnobf{0} \delta \\
        \beta: \bitnobf{0111\dots}\\
      \end{array}$
    }
  } 
  child {
    edge from parent [draw=none] {}
  }
  ;

  \node [intnode] (s2) [right=of s1] {
    $\begin{array}{c}
      \alpha: \bitnobf{\dots}\\
      \beta: \bitnobf{10110\dots}\\
    \end{array}$
  }
  child {
    node [intnode] {
      $\begin{array}{c}
        \alpha: {\bm \gamma} \\
        \beta: \bit{0000\dots}\\
      \end{array}$
    }
    child {
      node [intnode] {
        $\begin{array}{c}
          \alpha: {\bm \delta} \\
          \beta: \bitnobf{0111\dots}\\
        \end{array}$
      }
    } 
    child {
      node [leaf] {
        $\begin{array}{c}
          \alpha: {\bm \lambda} \\
        \end{array}$
      }
    }
  } 
  child {
    edge from parent [draw=none] {}
  }
  ;

  \node [intnode] (s3) [right=of s2] {
    $\begin{array}{c}
      \alpha: \bitnobf{\dots}\\
      \beta: \bitnobf{101}\bit{0}\bitnobf{10\dots}\\
    \end{array}$
  }
  child {
    node [intnode] {
      $\begin{array}{c}
        \alpha: \gamma \\
        \beta: \bitnobf{0}\bit{1}\bitnobf{000\dots}\\
      \end{array}$
    }
    child {
      node [intnode] {
        $\begin{array}{c}
          \alpha: \delta \\
          \beta: \bitnobf{0111\dots}\\
        \end{array}$
      }
    } 
    child {
      node [leaf] {
        $\begin{array}{c}
          \alpha: \lambda \\
        \end{array}$
      }
    }
  } 
  child {
    edge from parent [draw=none] {}
  }
  ;
\end{tikzpicture}
  \caption{Insertion of the new string $s = \dots \gamma \bitnobf{1}
    \lambda$ at position $3$. An existing node is split by adding a
    new internal node with a constant bitvector and a new leaf. The
    corresponding bits are then inserted in the root-to-leaf path
    nodes.}
  \label{fig:wtsplit}
\end{figure}
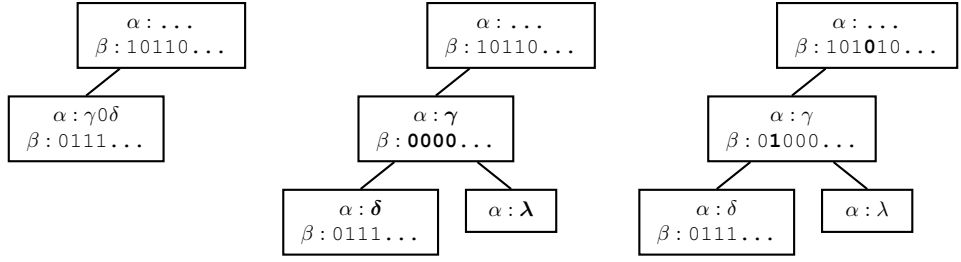

In both cases the number of operations ($\Rank$,
$\Insert$, $\Delete$) on the bitvectors is bounded by $O(h_s)$.
The operations we need to perform on the bitvectors are the standard
insertion/deletion, with one important exception: when a node is
split, we need to create a new constant bitvector of arbitrary
length. We call this operation $\Init(b, n)$, which fills an empty
bitvector with $n$ copies of the bit $b$. The following remark
rules out for our purposes most existing dynamic bitvector constructions.

\begin{observation}\label{obs:init}
  If the encoding of a constant (i.e. $\bitzero^n$ or $\bitone^n$)
  bitvector uses $\omega(f(n))$ memory words (of size $w$), $\Init(b, n)$ cannot be
  supported in $O(f(n))$ time.
\end{observation}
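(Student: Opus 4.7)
The plan is a direct word-RAM counting argument. In the standard word-RAM model with word size $w$, each elementary operation can modify only $O(1)$ memory cells, so an algorithm that executes $t$ steps starting from a given initial memory state can alter the contents of at most $O(t)$ cells. A freshly created empty bitvector occupies $O(1)$ words by convention; thus, after $\Init(b,n)$ returns, the memory cells whose contents differ from this initial configuration number at most $t+O(1)$, where $t$ is the running time of the call.

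I would then match this against the stated hypothesis. Let $m(n)=\omega(f(n))$ be the number of memory words that the encoding of $\bitzero^n$ or $\bitone^n$ uses. Each of those $m(n)$ cells must, after $\Init(b,n)$ returns, hold content determined by $b$ and $n$, and this content cannot be assumed to coincide with the default empty-bitvector state for more than $O(1)$ cells (since $\Init$ takes $n$ and $b$ as genuine parameters). Hence at least $m(n)-O(1)$ cells must have been written during the call, giving $t \geq m(n)-O(1) = \omega(f(n))$, which contradicts $t = O(f(n))$.

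The only genuinely subtle point, and the closest thing to an obstacle, is pinning down what is meant by ``the encoding uses $\omega(f(n))$ memory words''. I would interpret this as the minimum number of memory cells that must be in a specified, $b$- and $n$-dependent state for the resulting object to be a valid encoding supporting the required queries and updates. Implementations that merely reserve but never initialise a large block (lazy or implicit representations) do not conflict with the observation, because the number of \emph{meaningfully used} words in such representations is $O(1)$, making the hypothesis vacuous. Under this natural reading, the counting argument above suffices, and no further structural assumption on the data structure is required.
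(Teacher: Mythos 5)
The paper states this as a remark with no proof at all, treating it as self-evident; your word-RAM counting argument --- $t$ steps can modify only $O(t)$ cells, so a structure whose encoding requires $\omega(f(n))$ cells to hold $b$- and $n$-dependent contents forces $t = \omega(f(n))$ --- is exactly the intended (unstated) justification. Your discussion of what ``uses $\omega(f(n))$ memory words'' must mean, excluding lazily reserved but uninitialised blocks, correctly pins down the one interpretive subtlety and resolves it the way the paper implicitly assumes, so the proposal is correct and takes essentially the same route.
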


Uncompressed bitvectors use $\Omega(n / w)$ words; the
compressed bitvectors of \cite{DBLP:journals/talg/MakinenN08,
  navarrodynamicbitvectors09}, although they have a desirable occupancy
of $|\beta|H_0(\beta) + o(|\beta|)$, have $\Omega(n \log \log
    n / (w \log n))$ words of redundancy. Since we aim for polylog
operations, these constructions cannot be considered \emph{as is}.

\smallskip
\ttlpar{Main results.} 
We first consider the case of append-only sequences. We remark that, in the $\Insert$
operation described above,
when appending a string at the end of the sequence the bits inserted in the bitvectors are
appended at the end, so it is sufficient that the bitvectors support an $\Append$ operation in
place of a general $\Insert$. Furthermore, $\Init$ can
be implemented simply by adding a left offset in each bitvector, which
increments each bitvector space by $O(\log n)$ and can be checked in
constant time. Using the append-only bitvectors described in
Section~\ref{sub:append-only-bitvectors}, and observing that the
redundancy is as in Section~\ref{sec:wavelet-trie}, we can state the
following theorem.

\begin{theorem}
  \label{the:append-wt}
  The append-only Wavelet Trie on a dynamic sequence $S$ supports the
  operations $\Access$, $\Rank$, $\Select$, $\RankPrefix$,
  $\SelectPrefix$, and $\Append$ in $O(|s| + h_s)$ time.  The total
  space occupancy is $O(|\Sset|w) + |L| + nH_0(S) + o(\tilde{h} n)$ bits, where $L$
  is defined as in Theorem~\ref{thm:ltbound}.
\end{theorem}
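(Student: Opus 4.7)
The plan is to decompose the theorem into a time-complexity argument and a space argument, and reduce each to facts already established in Section~\ref{sec:wavelet-trie} and to the properties of the append-only bitvector promised in Section~\ref{sub:append-only-bitvectors} together with the dynamic Patricia Trie lemma.

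First I would show correctness and the $O(|s|+h_s)$ time bound for each operation. For the query operations $\Access$, $\Rank$, $\Select$, $\RankPrefix$, $\SelectPrefix$, the algorithms are exactly those of the static Wavelet Trie, so it suffices to observe that (i) locating the relevant trie node via prefix search takes $O(|s|)$ time in the dynamic Patricia Trie, and (ii) the $O(h_s)$ bitvector $\Rank$/$\Select$/$\Access$ calls along the root-to-leaf path each run in $O(1)$ time by the guarantees of the append-only bitvector. For $\Append(s)$, I would argue separately the two cases: (a) if $s\in\Sset$, we traverse the path in $O(|s|)$ time and at each of the $h_s$ internal nodes append a single bit, which is $O(1)$ by the append-only bitvector; (b) if $s\notin\Sset$, the dynamic Patricia Trie insertion splits one node in $O(|s|)$ time, so the new internal node must be furnished with a constant bitvector of length equal to the count of the corresponding bit at the parent, which is exactly the $\Init(b,m)$ operation; as noted in the paragraph preceding the theorem, $\Init$ is implemented by storing a left offset in $O(1)$ time. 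After the split the algorithm falls through to case (a), yielding total time $O(|s|+h_s)$ as claimed.

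Next I would account for the space. The Patricia Trie component contributes $O(|\Sset|w)+|L|$ bits by the dynamic Patricia Trie lemma. Each internal node carries one append-only bitvector; summing their encoded sizes gives $n H_0(S)$ bits plus redundancy, exactly as in the static case, because the decomposition of the entropy across the tree in~\cite{DBLP:conf/soda/GrossiGV03} depends only on the shape of the tree and the contents of the bitvectors, not on the encoding used. To get the $o(\tilde h n)$ redundancy I would invoke Lemma~\ref{lem:redundancywt} and Lemma~\ref{lem:rrrdelim} from Appendix~\ref{sec:multipleRRR}, noting that the append-only bitvector of Section~\ref{sub:append-only-bitvectors} is designed to meet the same per-bitvector redundancy bound as RRR. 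The $\Init$ offsets contribute at most $O(\log n)$ bits per internal node, hence $O(|\Sset|\log n) = O(|\Sset|w)$ bits overall, which is absorbed into the first term.

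The main obstacle is the space accounting for the bitvectors: one must justify that the redundancy of a \emph{collection} of append-only bitvectors of varying lengths $|\beta|$ still sums to $o(\tilde h n)$, despite the fact that $|\Sset|$ may be $\Theta(n)$ and hence we cannot naively bound $|\Sset|\cdot(\text{per-node redundancy})$. This is precisely what the appendix lemmas are designed to handle, and the key point I would stress is that the append-only bitvector matches the RRR redundancy profile per bitvector, so applying Lemmas~\ref{lem:redundancywt} and~\ref{lem:rrrdelim} verbatim yields the desired $nH_0(S)+o(\tilde h n)$ bound. Combining the two contributions gives the stated total of $O(|\Sset|w)+|L|+nH_0(S)+o(\tilde h n)$ bits and completes the proof.
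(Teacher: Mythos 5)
Your proposal is correct and follows essentially the same route as the paper, which justifies this theorem in the short paragraph preceding it: appends to the sequence translate into appends at the end of the traversed bitvectors, $\Init$ for the split node is realized as a left offset checked in $O(1)$ time and costing $O(\log n)$ bits per node, the append-only bitvector of Section~\ref{sub:append-only-bitvectors} supplies the constant-time operations and the $nH_0$ space, and the redundancy accounting is inherited from the static analysis (Lemmas~\ref{lem:redundancywt} and~\ref{lem:rrrdelim}) together with the dynamic Patricia Trie's $O(|\Sset|w)+|L|$ bits. Your write-up is simply a more explicit version of that argument, including the case split on whether $s\in\Sset$.
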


Using instead the fully-dynamic bitvectors in
Section~\ref{sub:fully-dynaminc-bitvectors}, we can state the following theorem.

\begin{theorem}
  \label{the:dynamic-wt}
  The dynamic Wavelet Trie on a dynamic sequence $S$ supports the
  operations $\Access$, $\Rank$, $\Select$, $\RankPrefix$,
  $\SelectPrefix$, and $\Insert$ in $O(|s| + h_s \log n)$
  time. $\Delete$ is supported in $O(|s| + h_s \log n)$ time if $s$
  occurs more than once, otherwise time is $O(\hat \ell + h_s \log
  n)$, where $\hat \ell$ is the length of the longest string.
  The total space occupancy is $O(nH_0(S) + |\Sset|w) + L$ bits,
  where $L$ is defined as in Theorem~\ref{thm:ltbound}.
\end{theorem}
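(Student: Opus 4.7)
The plan is to assemble the dynamic Wavelet Trie by pairing the dynamic Patricia Trie given by the preceding lemma with the fully-dynamic compressed bitvectors promised in Section~\ref{sub:fully-dynaminc-bitvectors}, using exactly the $\Insert$ and $\Delete$ algorithms already described in this section. Queries are executed by the same top-down ($\Access$, $\Rank$, $\RankPrefix$) or bottom-up ($\Select$, $\SelectPrefix$) traversal of the trie as in the static case, the only difference being the cost model of the underlying bitvectors. Conceptually the proof reduces to (i)~a per-level accounting of operations and (ii)~a per-node accounting of bits.

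For the time bounds I would argue along the root-to-$s$ path. Every query performs $O(1)$ bitvector operations at each of the $O(h_s)$ internal nodes on the path, plus $O(|s|)$ work reading the compressed-prefix labels $\alpha$; since each bitvector operation costs $O(\log n)$ in the fully-dynamic construction, the total is $O(|s| + h_s\log n)$. $\Insert(s,\pos)$ descends the same path, inserting a single bit in each bitvector (again $O(\log n)$ each). If $s$ is new, the Patricia-Trie insertion of the lemma costs an additional $O(|s|)$ and creates one fresh internal node whose bitvector is set up by a single call to $\Init(b,m)$; this must also cost $O(\log n)$, which I take to be guaranteed by the construction in Section~\ref{sub:fully-dynaminc-bitvectors}. $\Delete(\pos)$ is symmetric; when the removed occurrence is the last one, the Patricia-Trie deletion from the lemma contributes $O(\hat\ell)$ instead of $O(|s|)$, giving the stated $O(\hat\ell + h_s\log n)$.

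For the space bound I would split the structure into its two components. The dynamic Patricia Trie occupies $O(|\Sset|w) + |L|$ bits by the lemma, where $L$ is the concatenation of $\alpha$-labels in depth-first order. Summing the bitvectors across all internal nodes, the same decomposition used in the proof of Theorem~\ref{thm:wtstatic} (following \cite{DBLP:conf/soda/GrossiGV03}) shows that the information-theoretic content is $nH_0(S)$; the fully-dynamic bitvector adds at most a constant-factor redundancy per bitvector, and these aggregate to at most $O(nH_0(S))$. Collecting terms yields $O(nH_0(S) + |\Sset|w) + L$ bits as claimed.

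The main obstacle is entirely localized in Section~\ref{sub:fully-dynaminc-bitvectors}: we need a single dynamic bitvector primitive that simultaneously (a)~supports $\Access$, $\Rank$, $\Select$, insertion, and deletion in $O(\log n)$ time, (b)~supports $\Init(b,m)$ in $O(\log n)$, which by Remark~\ref{obs:init} forces the encoding of a constant run of length $m$ to use $O(1)$ machine words, and (c)~achieves per-bitvector space close enough to $|\beta|H_0(\beta)$ that the sum across the $O(|\Sset|)$ bitvectors of the trie is $O(nH_0(S))$. Once that primitive is in place, the proof of this theorem is essentially the level-by-level bookkeeping above, and none of the trie-level arguments change from the static Wavelet Trie.
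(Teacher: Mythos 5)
Your proposal is correct and follows essentially the same route as the paper: the theorem is obtained by plugging the RLE$+\gamma$ fully-dynamic bitvectors of Section~\ref{sub:fully-dynaminc-bitvectors} (which supply exactly the $O(\log n)$-time $\Init$/$\Insert$/$\Delete$ and $O(nH_0(\beta)+\log n)$ space you require) into the insertion/deletion algorithms and the dynamic Patricia Trie lemma already given in that section. The only detail worth adding is that the additive $O(\log n)$ term per bitvector sums to $O(|\Sset|\log n)=O(|\Sset|w)$ over the internal nodes, which is absorbed by the stated space bound.
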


Note that, using the compact notation defined in the introduction, the
space bound in Theorem~\ref{the:append-wt} can be written as $\LB(S) +
\PT(\Sset) + o(\tilde h n)$, while the one in
Theorem~\ref{the:dynamic-wt} can be written as $\LB(S) + \PT(\Sset) + O(nH_0)$.

\subsection{Append-only bitvectors}
\label{sub:append-only-bitvectors}

In this section we describe an append-only bitvector with
constant-time $\Rank$/$\Select$/$\Append$ operations and
nearly-optimal space occupancy. The data structure uses RRR as a
black-box data structure, assuming only its query time and space
guarantees. We require the following \emph{decomposable} property on RRR:
given an input bitvector of $n$ bits packed into $O(n/w)$ words of
size $w \geq \log n$, RRR can be built in $O(n' / \log n)$ time for
any chunk of $n' \geq \log n$ consecutive bits of the input bitvector,
using table lookups and the Four-Russians trick; moreover, this $O(n'
/ \log n)$-time work can be spread over $O(n' / \log n)$ steps, each
of $O(1)$ time, that can be interleaved with other operations not
involving the chunk at hand. This a quite mild requirement and, for
this reason, it is a general technique that can be applied to other
static compressed bitvectors other than RRR with the same
guarantees. Hence we believe that the following approach is
of independent interest.

\begin{theorem}
  \label{the:append-only-bitvectors}
  The append-only bitvector supports $\Access$, $\Rank$, $\Select$,
  and $\Append$ on a bitvector $\beta$ in $O(1)$ time. The total space
  is $nH_0(\beta) + o(n)$ bits, where $n = |\beta|$.
\end{theorem}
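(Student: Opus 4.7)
The plan is to partition the growing bitvector $\beta$ into fixed-size chunks of $b = \Theta(\log^2 n)$ bits and represent each completed chunk by an independent RRR structure, while keeping the most recent bits in a plain uncompressed buffer to support $\Append$. Each new bit is pushed into the current buffer in $O(1)$ time; when the buffer fills, it is handed off to a worker that builds its RRR image incrementally in the background, performing $O(b/\log n) = O(\log n)$ steps of work distributed over the next $\Theta(\log n)$ appends by invoking the decomposable property stated in the theorem. During a construction window the data structure therefore keeps at most two plain buffers alive, both of size $O(\log^2 n)$ bits, and any query whose target chunk has not yet been RRR-ized is answered directly against the plain buffer using precomputed popcount tables.

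To route $\Rank$/$\Select$ to the proper chunk in $O(1)$, I would maintain two append-only auxiliary arrays: a cumulative-count array $C$ holding one $O(\log n)$-bit entry per chunk boundary, and a select-sampling array $S$ that records the chunk index of every $\lceil\log n\rceil$-th $\bit{1}$. Both occupy $O(n/\log n) = o(n)$ bits. $\Access(p)$ and $\Rank(\cdot,p)$ reduce in $O(1)$ to computing $k=\lfloor p/b\rfloor$, reading $C[k]$, and invoking the RRR in-chunk operation. For $\Select(\cdot,i)$ the array $S$ yields a candidate chunk, from which the in-chunk select of RRR finishes the query. The main obstacle I anticipate is that, when $\bit{1}$s are sparse, a single sample in $S$ may be separated from the next sample by many chunks, breaking the $O(1)$ bound. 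I would resolve this with the standard two-level dense/sparse Clark--Munro scheme, tagging each inter-sample gap as sparse once it spans more than a constant number of chunks and tabulating its $\bit{1}$ positions verbatim; an easy accounting shows that the explicit tables for sparse gaps together contribute only $o(n)$ bits.

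For the space bound, let $\beta^{(i)}$ denote the $i$-th chunk; concavity of $H_0$ together with Jensen's inequality gives $\sum_i b\,H_0(\beta^{(i)}) \leq n\,H_0(\beta)$, so the entropy component is within the target. The RRR redundancy per chunk is $O((b\log\log b)/\log b)$, summing to $O(n\log\log n/\log n) = o(n)$ over the $n/b$ chunks, and the auxiliary arrays together with the two plain buffers add $o(n)$ more. A secondary concern is that $n$ changes over time and with it the parameters $b$, $\log n$, and the underlying Four-Russians tables of RRR. I would handle this by the classical trick of globally rebuilding the structure whenever $n$ doubles and charging the rebuild to the appends that pushed $n$ across the threshold, running the old and new representations in parallel and migrating $O(1)$ chunks per append. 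Checking that the decomposable RRR construction can be restarted on a partially processed chunk without losing its per-step constant-time guarantee is the most delicate piece of bookkeeping, and the step to which I would devote the most care in writing up the full proof.
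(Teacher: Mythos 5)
Your top-level decomposition---fixed $\polylog(n)$-sized RRR chunks, a small uncompressed tail buffer with tabulated answers, background construction of each completed chunk via the decomposable property, and global rebuilding when $n$ doubles---is essentially the same as the paper's final bootstrapping step, and your entropy accounting (Jensen over chunks plus per-chunk RRR redundancy) matches as well. Where you genuinely diverge is in how $\Rank$ and especially $\Select$ are routed to the correct chunk. The paper encodes the partial sums of $\bitone$s (and, symmetrically, $\bitzero$s) as a sparse auxiliary bitvector and stores it in a separately developed append-only compressed bitvector with worst-case $O(1)$ $\Append$, built by Overmars's logarithmic method with exponentially sized blocks and fusion trees; the de-amortization of that structure inflates its entropy term by a constant factor, which is harmless only because the auxiliary bitvector is sparse. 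You replace all of that with a flat cumulative-count array plus a Clark--Munro-style sampled select directory, which is more elementary and avoids fusion trees entirely.

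That substitution is where the gap is. As specified, your array $S$ stores one $\Theta(\log n)$-bit chunk index per $\lceil\log n\rceil$ occurrences of $\bitone$, i.e.\ $\Theta(m)$ bits in total, not the $O(n/\log n)$ you claim; when $m=\Theta(n)$ this is $\Theta(n)$ bits of redundancy and the bound $nH_0(\beta)+o(n)$ fails, since the theorem requires leading constant exactly $1$ on the entropy term. The two-level dense/sparse scheme you invoke does repair the space, but then the real obligation is to show that the \emph{entire} two-level directory---top-level samples, the dense/sparse classification of each inter-sample gap, and the verbatim position tables for sparse gaps---can be maintained under $\Append$ in worst-case $O(1)$ time, including the moment a growing tail gap flips from dense to sparse. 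This is plausible (appends only touch the last gap) but it is precisely the bookkeeping the paper's bootstrapping is designed to sidestep, and ``an easy accounting'' does not discharge it; you would also need the symmetric directory for $\Select(\bitzero,\cdot)$. Until that maintenance argument is written out, the proof is incomplete.
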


Before describing the data structure and proving
Theorem~\ref{the:append-only-bitvectors} we need to introduce some
auxiliary lemmas.

\begin{lemma}[Small Bitvectors]
  \label{lem:small-bitvectors}
  Let $\beta'$ be a bitvector of bounded size $n' =
  O(\polylog(n))$. Then there is a data structure that supports
  $\Access$, $\Rank$, $\Select$, and $\Append$ on $\beta'$ in $O(1)$
  time, while occupying $O(\polylog(n))$ bits.
\end{lemma}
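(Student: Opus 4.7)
The plan is to pack $\beta'$ into an array of $O(n'/w) = O(\polylog(n)/\log n)$ machine words and to reduce every operation to a constant number of word-level manipulations supported by a pair of global lookup tables. I would precompute once two tables of $2^{w/2}$ entries each that, given any $w/2$-bit pattern, return the popcount of any prefix and the position of the $j$-th $\bitone$; together these tables occupy $o(n)$ bits and are shared across all instances, not charged to an individual small bitvector. Each instance additionally stores, in a second packed array, the cumulative popcount at every $w/2$-bit block boundary, using $O(\log n') = O(\log\log n)$ bits per entry.

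With this layout, $\Access(\pos)$ is a direct bit fetch; $\Rank(\pos)$ adds the cumulative popcount of the block containing $\pos$ to one lookup into the popcount table on the residual partial block; and $\Append(b)$ appends $b$ to the last block with a shift-or and increments the corresponding popcount entry. All three take $O(1)$ worst-case time and use only $O(\polylog(n))$ bits of per-instance state.

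The delicate operation is $\Select(k)$, which reduces to locating the first block whose cumulative popcount is at least $k$ and then an intra-block lookup into the global select table. The predecessor search is the main obstacle, since in general the summary array does not fit in a single machine word. I would handle it by word-level parallelism: because each summary is $\Theta(\log\log n)$ bits, $\Theta(\log n/\log\log n)$ consecutive summaries fit into one word and can be scanned for the correct block in $O(1)$ via standard sub-word compare-and-find-first tricks. To cover the full polylog regime I would stack this step into recursively sampled summary levels: since $n'$ is polylog, the number of blocks is $m = O(\polylog(n)/\log n)$, so $\log m = O(\log\log n)$, and only $O(1)$ such levels suffice before the top-level summary fits in one word. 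Each level costs $O(1)$, giving $O(1)$ overall.

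Finally, the packed bitvector, together with the $O(1)$ nested summary arrays, forms a geometric series whose total size is $O(\polylog(n))$ bits per instance, matching the claimed space bound. The crucial quantitative point is that $n' = O(\polylog(n))$ makes $\log m = O(\log\log n)$, which is exactly what enables a constant-depth hierarchy of word-parallel searches for $\Select$; relaxing the hypothesis to $n' = \omega(\polylog(n))$ would break this argument and is the obstacle I expect to be most subtle to phrase cleanly in the write-up.
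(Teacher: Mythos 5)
Your construction is correct, but it is a genuinely different (and much heavier) route than the one the paper takes. The paper exploits the hypothesis $n' = O(\polylog(n))$ in the bluntest possible way: it simply stores \emph{all} answers to $\Rank$ and $\Select$ explicitly in arrays of $n'$ entries of $O(\log n')$ bits each, for a total of $O(n'\log n') = O(\polylog(n))$ bits, and supports $\Append$ in $O(1)$ time because the newly created answers are computable from a running count of $\bitone$s and the positions of the last $\bitzero$ and $\bitone$. No packing, no lookup tables, no multi-level summaries. Your word-RAM design --- packed blocks, global popcount/select tables, per-block cumulative counts, and a constant-depth hierarchy of word-parallel predecessor searches for $\Select$ --- does work (your observation that $\log m = O(\log\log n)$ blocks implies a constant number of summary levels is the right quantitative point, and the append-path updates touch only $O(1)$ fields per level), and it buys a per-instance space of $O(n') + O(n'\log\log n/\log n)$ bits rather than $O(n'\log n')$; but this saving is irrelevant here, since both are $O(\polylog(n))$ and the lemma is only invoked for a single residual segment inside a structure whose budget already tolerates $o(n)$ redundancy. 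Two caveats on your version: $(i)$ tables indexed by $w/2$-bit patterns are only $o(n)$ bits when $w = O(\log n)$; since the paper assumes only $w \geq \log n$, you should chunk by $(\log n)/2$ bits instead; $(ii)$ those shared tables are extra space not covered by the lemma's stated $O(\polylog(n))$ bound, so you are proving a slightly weaker statement (acceptable in context, where $o(n)$ shared terms are absorbed, but worth flagging). Finally, your closing remark that the hypothesis $n' = O(\polylog(n))$ is what makes the argument go through is true of your proof but overstates its necessity: in the paper's proof the same hypothesis is used only to bound $n'\log n'$, which is the real reason the lemma is stated this way.
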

\begin{proof}
  It is sufficient to store explicitly all the answers to the queries
  $\Rank$ and $\Select$ in arrays of $n'$ elements, thus taking
  $O(n'\log n') = O(\polylog(n))$. $\Append$ can be supported in
  constant time by keeping a running count of the $\bitone$s in the
  bitvector and the position of the last $\bitzero$ and $\bitone$,
  which are sufficient to compute the answers to the $\Rank$ and
  $\Select$ queries for the appended bit.
\end{proof}

\begin{lemma}[Amortized constant-time]
  \label{lem:amortized-constant-append}
  There is a data structure that supports $\Access$, $\Rank$, and
  $\Select$ in $O(1)$ time and $\Append$ in \emph{amortized} $O(1)$
  time on a bitvector $\beta$ of $n$ bits. The total space occupancy
  is $nH_0(\beta) + o(n)$ bits.
\end{lemma}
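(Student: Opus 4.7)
The plan is to partition $\beta$ into a \emph{bulk} part $M$, consisting of a prefix whose length is a multiple of $\ell = \Theta(\log^2 n)$, plus a \emph{tail} $T$ of at most $\ell$ bits. Each block of $\ell$ consecutive bits of $M$ is stored independently using RRR, while $T$ is stored by the small-bitvector structure of Lemma~\ref{lem:small-bitvectors}. An $\Append(b)$ pushes $b$ onto $T$; when $|T|$ reaches $\ell$ the tail is \emph{flushed}, i.e.\ converted into a new RRR block appended to $M$ and reset to empty. By the decomposable property assumed on RRR, the flush costs $O(\ell / \log n)$ time. Amortising this over the $\ell$ appends between consecutive flushes, and adding the $O(1)$ cost of the tail append, gives amortised $O(1)$ time per $\Append$.

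\textbf{Queries.} Because every RRR block of $M$ has the same size $\ell$, the block index containing a position $i$ and the in-block offset are computable by one division and one modulus in $O(1)$. Keeping an $O(\log n)$-bit cumulative $\bitone$-counter per block (updated in $O(1)$ at each flush, total $o(n)$ bits) reduces $\Access(i)$ and $\Rank(b,i)$ inside $M$ to one RRR query plus a table lookup, hence $O(1)$. For $\Select(b,j)$ I also need the block containing the $j$-th $b$; I would maintain this as a sparse auxiliary bitvector whose $\bitone$s mark the cumulative $b$-counts at block boundaries, encoded in Elias--Fano style. Since this structure is only appended to (one new value per flush) and has density $1/\ell$, it is maintained incrementally in $O(1)$ per flush, supports $\Select$ in $O(1)$, and uses $O((n/\ell) \log \ell) = o(n)$ bits. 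Queries whose target lies in the tail are forwarded to $T$ in $O(1)$ by Lemma~\ref{lem:small-bitvectors}; deciding between $M$ and $T$ is a single comparison against the stored length or $b$-count of $M$.

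\textbf{Space and main obstacle.} The RRR blocks contribute $\sum_k |c_k| H_0(c_k) + o(|c_k|)$ bits; by a standard convexity argument the per-block entropies sum to $nH_0(\beta) + o(n)$, while the per-block redundancies sum to $O(n \log\log n / \log n) = o(n)$. The tail, cumulative counters, and auxiliary $\Select$ index each add $o(n)$ bits, yielding the claimed total. The real difficulty is supporting worst-case $O(1)$ $\Select$ over a collection of blocks that grows online: off-the-shelf dynamic rank/select structures typically pay $\Omega(\log\log n)$ for the predecessor step. The key observations to exploit are that insertions occur only at the right end and that the boundary bitvector has density $1/\ell$, so an incremental Elias--Fano layout combined with precomputed word-parallel select-within-a-word tables achieves the required $O(1)$; everything else reduces to standard constant-time succinct-bitvector machinery.
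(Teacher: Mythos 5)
Your block decomposition is genuinely different from the paper's: the paper proves this lemma via Overmars's logarithmic method, keeping only $t=\Theta(\log n)$ geometrically-sized static RRR pieces $V_t,\dots,V_1$ (which is why merging/rebuilding and hence amortization appear at all), precisely so that the partial sums over the pieces fit in a constant-height Fusion Tree and $\Select$'s predecessor step is $O(1)$. Your layout --- $\Theta(n/\log^2 n)$ fixed-size RRR blocks plus a small tail --- is essentially the paper's \emph{final} bootstrapped construction (the proof of the append-only bitvector theorem), and there the predecessor/partial-sum structure is implemented using the already-proved constant-time append-only bitvector of this very lemma. Used as a proof of the lemma itself, your argument is circular at exactly the point you flag as "the real difficulty": to answer $\Select(b,j)$ you must find, in worst-case $O(1)$ time, which of $\Theta(n/\log^2 n)$ blocks contains the $j$-th $b$, i.e.\ a predecessor query over that many monotone values under appends. "Incremental Elias--Fano plus word-parallel tables" does not deliver this: Elias--Fano gives $O(1)$ $\Select$ on the marked positions but its predecessor/$\Rank$ direction needs a full FID on the upper-bits bitvector --- an append-only bitvector with $O(1)$ rank/select over $\Theta(n/\log^2 n)$ ones, which is the problem being solved. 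Fusion trees over that many keys cost $\Theta(\log n/\log\log n)$, and off-the-shelf dynamic predecessor pays $\Omega(\log\log n)$, as you note. So the $\Select$ bound is unproven as written. (Your $\Access$ and $\Rank$, by contrast, are fine: block index by division plus an explicit $o(n)$-bit cumulative-count array plus one in-block RRR query.)

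The gap is probably repairable without switching to the paper's method, but it needs an actual construction: observe that the boundary structure for $\Select$ only ever needs $\Rank$ (namely $\Rank_1$ at position $j$ on a sparse bitvector of length $O(n)$ marking cumulative $b$-counts at block boundaries), and $\Rank$-only can be handled by one more level of the same fixed-block scheme with no predecessor search. You would have to say this and check that appending up to $\ell$ bits to the auxiliary bitvector per flush stays within the amortized budget. Two smaller issues: with $\ell=\Theta(\log^2 n)$ the summed RRR redundancy is $O(n\log\log\log n/\log\log n)$, not $O(n\log\log n/\log n)$ --- still $o(n)$, but your stated bound is wrong; and you never address what happens when $n$ grows enough that $\ell=\Theta(\log^2 n)$ must change, which the paper handles by global rebuilding.
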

\begin{proof}
  We split the input bitvector $\beta$ into $t$ smaller bitvectors
  $V_t, V_{t-1}, \ldots, V_1$, such that $\beta$ is equal to the
  concatenation $V_t \cdot V_{t-1} \cdots V_1$ at any time. Let $n_i =
  |V_i| \geq 0$ be the length of $V_i$, and $m_i$ be the number of
  $\bitone$s in it, so that $\sum_{i=1}^t m_i = m$ and $\sum_{i=1}^t
  n_i = n$.  Following Overmars's logarithmic
  method~\cite{Overmars:1983:DDD}, we maintain a collection of static
  data structures on $V_t, V_{t-1}, \ldots, V_1$ that are periodically
  rebuilt.
  \begin{enumerate}[(a)]
    \shrinkitems
  \item \label{item:SD} A data structure $F_1$ as described in
    Lemma~\ref{lem:small-bitvectors} to store $\beta' = V_1$. Space is
    $O(\polylog(n))$ bits.
  \item \label{item:FD} A collection of static data structures $F_t,
    F_{t-1}, \ldots, F_2$, where each $F_i$ stores $V_i$ using
    RRR. Space occupancy is $nH_0(\beta) + o(n)$ bits.
  \item \label{item:PD} Fusion Trees~\cite{JCSS::FredmanW1993} of
    constant height storing the partial sums on the number of
    $\bitone$s, $s_i^\bitone = \sum_{j=t}^{i+1} m_j$, where
    $s_t^\bitone=0$, and symmetrically the partial sums on the number
    of $\bitzero$s, $s_i^\bitzero = \sum_{j=t}^{i+1} (n_j-m_j)$,
    setting $s_t^\bitone=0$.  $\Pred$ takes $O(1)$ time and
    construction is $O(t)$ time. Space occupancy is $O(t \log n) =
    o(n)$ bits.
  \end{enumerate}

  We fix $r = c \log n_0$ for a suitable constant $c>1$, where $n_0$
  is the length $n > 2$ of the initial input bitvector $\beta$. We
  keep this choice of $r$ until $F_t$ is reconstructed: at that point,
  we set $n_0$ to the current length of $\beta$ and we update $r$
  consistently. Based on this choice of $r$, we guarantee that $r =
  \Theta(\log n)$ at any time and introduce the following constraints:
  $n_1 \leq r$ and, for every $i > 1$, $n_i$ is either $0$ or $2^{i -
    2} r$. It follows immediately that $t = \Theta(\log n)$, and hence
  the Fusion Trees in~(\ref{item:PD}) contain $O(\log n)$ entries,
  thus guaranteeing constant height.

  We now discuss the query operations. $\Rank(b, \pos)$ and
  $\Select(b, \occidx)$ are performed as follows for a bit $b \in \{
  \bitzero, \bitone \}$. Using the data structure in~(\ref{item:PD}),
  we identify the corresponding bitvector $V_i$ along with the number
  $s_i^b$ of occurrences of bit~$b$ in the preceding ones, $V_t,
  \ldots, V_{i+1}$. The returned value corresponds to the index $i$ of
  $F_i$, which we query and combine the result with $s_i^b$: we output
  the sum of $s_i^b$ with the result of $\Rank(b, \pos -
  \sum_{j=t}^{i+1} n_i)$ query on $F_i$ in the former case; we output
  $\Select(b, \occidx - s_i^b)$ query on $F_i$ in the latter. Hence,
  the cost is $O(1)$ time.

  It remains to show how to perform $\Append(b)$ operation. While $n_1
  < r$ we just append the bit $b$ to $F_1$, which takes constant time
  by Lemma~\ref{lem:small-bitvectors}. When $n_1$ reaches $r$, let $j$
  be the smallest index such that $n_j = 0$. Then $\sum_{i=1}^{j-1}
  n_i = 2^{j - 2} r$, so we concatenate $V_{j-1}\cdots V_1$ and rename
  this concatenation $V_j$ (no collision since it was $n_j = 0$). We
  then rebuild $F_j$ on $V_j$ and set $F_i$ for $i < j$ to empty
  (updating $n_j, \ldots, n_1$). We also rebuild the Fusion Trees
  of~(\ref{item:PD}), which takes an additional $O(\log n)$ time.
  When $F_t$ is rebuilt, we have that the new $V_t$ corresponds to the
  whole current bitvector $\beta$, since $V_{t-1}, \ldots, V_1$ are
  empty. We thus set $n_0 := |\beta|$ and update $r$ consequently. By
  observing that each $F_j$ is rebuilt every $O(n_j)$ $\Append$
  operations and that RRR construction time is $O(n_j/\log n)$, it
  follows that each $\Append$ is charged $O(1/\log n)$ time on each
  $F_j$, thus totaling $O(t/\log n) = O(1)$ time.
\end{proof}

We now show how to de-amortize the data structure in
Lemma~\ref{lem:amortized-constant-append}. In the de-amortization we
have to keep copies of some bitvectors, so the $nH_0$ term becomes
$O(nH_0)$.

\begin{lemma}[Redundancy]
  \label{lem:deamortized-constant-append}
  There is a data structure that supports $\Access$, $\Rank$,
  $\Select$, and $\Append$ in $O(1)$ time on a bitvector $\beta$ of
  $n$ bits. The total space occupancy is $O(nH_0(\beta)) + o(n)$.
\end{lemma}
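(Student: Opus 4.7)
The plan is to de-amortize Lemma~\ref{lem:amortized-constant-append} by exploiting the decomposable construction property of RRR: a rebuild of $F_j$ on $n_j = O(2^{j-2} r)$ bits takes $O(n_j/\log n)$ total time, which can be broken into that many $O(1)$-time steps interleavable with any operation not touching the same chunk. In the amortized scheme each $F_j$ is rebuilt in a single burst every $2^{j-2} r$ appends. Instead, I would spread the construction over the $2^{j-2} r$ appends preceding the deadline, paying $O(1/r) = O(1/\log n)$ steps per append per level; summed over $t = \Theta(\log n)$ levels this contributes only $O(1)$ scheduled work per append, on top of the $O(1)$ update to $F_1$ and to the fusion tree of partial sums already present in Lemma~\ref{lem:amortized-constant-append}.

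For each level $j \geq 2$ I would maintain two structures: the currently served $F_j$, and a pending $F_j^{\text{new}}$ being built in the background. As soon as a rebuild of level $j$ finishes, construction of the next $F_j^{\text{new}}$ starts. To feed the decomposable builder, I would duplicate each lower-level structure at the instant its contents become committed to a higher-level rebuild: the committed copy is read-only and streamed in chunks of $\Theta(\log n)$ bits to the RRR builder, while a fresh copy absorbs subsequent appends. The committed copy is discarded the moment $F_j^{\text{new}}$ goes live and replaces $F_j$ in the partial-sum structure. At any moment each level contributes at most two copies of its bits, and together with the $O(\polylog(n))$-bit dynamic $F_1$ of Lemma~\ref{lem:small-bitvectors} this yields total space $O(nH_0(\beta)) + o(n)$, where the constant factor absorbs the transient duplication.

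Queries remain as in the amortized lemma: a single $\Pred$ in the fusion tree locates the active $F_j$, and the RRR and small-structure queries finish in $O(1)$. An $\Append(b)$ does three things in worst-case $O(1)$ time: feeds $b$ into $F_1$; advances by one step the pending RRR construction scheduled for this append (round-robin across the levels currently under construction); and, if a deadline is reached, atomically swaps $F_j^{\text{new}}$ in for $F_j$ and updates the partial sums. The main obstacle will be the scheduling argument: I must certify that with only $O(1)$ background work per append every $F_j^{\text{new}}$ finishes on time. Because level $j$ requires $O(2^{j-2})$ steps spread over an interval of $2^{j-2} r$ appends, its rate is exactly $O(1/r)$, and assigning one background-work token per append in round-robin across the $t$ levels preserves the invariant that the remaining build work at level $j$ is at most proportional to the number of remaining appends before its deadline; this token-passing argument suffices to conclude.
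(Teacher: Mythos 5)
Your proposal follows essentially the same route as the paper's proof: Overmars-style partial rebuilding that spreads each decomposable RRR construction of $F_j$ over the appends preceding its deadline with round-robin $O(1)$ background work per $\Append$, while keeping at most one extra (frozen/committed) copy of each level's bits, which is exactly what degrades the entropy term from $nH_0(\beta)$ to $O(nH_0(\beta))$. The one detail you gloss over is the ``atomically \dots updates the partial sums'' step at swap time---the fusion trees of Lemma~\ref{lem:amortized-constant-append}(c) have $\Theta(\log n)$ entries and cannot be rebuilt in $O(1)$ worst case, so the paper prebuilds the next fusion trees during the $\Theta(r)$ appends that fill $V_1$, using the current $n_i,m_i$ values plus a constant-time \emph{correction} for the last non-empty segment; your per-append budget suffices to do the same, so this is a fixable omission rather than a flaw in the approach.
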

\begin{proof}
  To de-amortize the structure we follow Overmars's classical method
  of partial rebuilding \cite{Overmars:1983:DDD}. The idea is to
  spread the construction of the RRR's $F_j$ over the next $O(n_j)$
  $\Append$ operations, charging extra $O(1)$ time each. We already
  saw in Lemma~\ref{lem:amortized-constant-append} that this suffices
  to cover all the costs. Moreover, we need to increase the speed of
  construction of $F_j$ by a suitable constant factor with respect to
  the speed of arrival of the $\Append$ operations, so we are
  guaranteed that the construction of $F_j$ is completed before the
  next construction of $F_j$ is required by the argument shown in the
  proof of Lemma~\ref{lem:amortized-constant-append}. We refer the
  reader to \cite{Overmars:1983:DDD} for a thorough discussion of the
  technical details of this general technique.

  While $V_1$ reaches its bound of $r$ bits, we have a budget of
  $\Theta(r) = \Theta(\log n)$ operations that we can use to
  prepare the next version of the data structure. We use this budget
  to perform the following operations.
  
  \begin{enumerate}[(1)]
  \item Identify the smallest $j$ such that $n_j = 0$ and start the
    construction of $F_j$ by creating a \emph{proxy bitvector} $\tilde
    F_j$ which references the existing $F_{j-1},\dots,F_1$ and Fusion
    Trees in~(\ref{item:PD}), so that it can answer queries in $O(1)$
    time as if it was the fully built $F_j$. When we switch to this
    version of the data structure, these $F_{j-1},\dots,F_1$ become
    accessible only inside $\tilde F_j$.
    
  \item Build the Fusion Trees in~(\ref{item:PD}) for the next
    reconstruction of the data structure. Note that this would require
    to know the final values of the $n_i$s and $m_i$s when $V_1$ is
    full and the reconstruction starts. Instead, we use the current
    values of $n_i$ and $m_i$: only the values for the last non-empty
    segment will be wrong. We can \emph{correct} the Fusion Trees by
    adding an additional \emph{correction} value to the last non-empty
    segment; applying the correction at query time has constant-time
    overhead.

  \item Build a new version of the data structure which references the
    new Fusion Trees, the existing bitvectors $F_t,\dots,F_{j+1}$, the
    proxy bitvector $\tilde F_j$ and new empty bitvectors
    $F_{j-1},\dots,F_1$ (hence, $n_j = 2^{j-2} r$ and $n_{j-1} =
    \cdots = n_1 = 0$).
  \end{enumerate}

  When $n_1$ reaches $r$, we can replace in constant time the data
  structure with the one that we just finished rebuilding.

  At each $\Append$ operation, we use an additional $O(1)$ budget to
  advance the construction of the $F_j$s from the proxies $\tilde
  F_j$s in a round-robin fashion. When the construction of one $F_j$
  is done, the proxy $\tilde F_j$ is discarded and replaced by
  $F_j$. Since, by the amortization argument in the proof of
  Lemma~\ref{lem:amortized-constant-append}, each $F_j$ is completely
  rebuilt by the time it has to be set to empty (and thus used for the
  next reconstruction), at most one copy of each bitvector has to be
  kept, thus the total space occupancy grows from $nH_0(\beta) + o(n)$
  to $O(nH_0(\beta)) + o(n)$.  Moreover, when $r$ has to increase (and
  thus the $n_i$'s should be updated), we proceed as
  in~\cite{Overmars:1983:DDD}.
\end{proof}

We can now use the de-amortized bitvector to bootstrap a constant-time
append-only bitvector with space occupancy $nH_0(\beta) + o(n)$, thus
proving Theorem~\ref{the:append-only-bitvectors}.
\begin{proof}[of Theorem~\ref{the:append-only-bitvectors}]
  
  Let $\beta$ be the input bitvector, and $L = \Theta(\polylog(n))$ be a
  power of two. 
  We split $\beta$ into $n_L \equiv
  \lfloor n/L \rfloor$ smaller bitvectors $B_i$'s, each of length $L$
  and with $\hat m_i$ $\bitone$s ($0 \leq \hat m_i \le L$), plus a
  residual bitvector $B'$ of length $0 \leq |B'|<L$: at any time $\beta =
  B_1 \cdot B_2 \cdots B_{n_L} \cdot B'$. Using
  this partition, we maintain the following data structures:

  \begin{enumerate}[(1)]
    \shrinkitems
  \item \label{item:BD} A collection $\hat F_1$, $\hat F_2$,~\dots,
    $\hat F_{n_L}$ of static data structures, where each $\hat F_i$
    stores $B_i$ using RRR.
  \item \label{item:FB} The data structure in
    Lemma~\ref{lem:small-bitvectors} to store $B'$.
  \item \label{item:TD} The data structure in
    Lemma~\ref{lem:deamortized-constant-append} to store the partial
    sums $\hat s_i^\bitone = \sum_{j=1}^{i-1} \hat m_j$, setting $\hat
    s_1^\bitone=0$.  This is implemented by maintaining a bitvector
    that has a $\bitone$ for each position $\hat s_i^\bitone$, and
    $\bitzero$ elsewhere. $\Pred$ queries can be implemented by
    composing $\Rank$ and $\Select$. The bitvector has length $n_L+m$
    and contains $n_L$ $\bitone$s. The partial sums $\hat s_i^\bitzero
    = \sum_{j=1}^{i-1} (L - \hat m_j)$ are kept symmetrically in
    another bitvector.
  \end{enumerate}

  $\Rank(b, \pos)$ and $\Select(b, \occidx)$ are implemented as follows
  for a bit $b \in \{ \bitzero, \bitone \}$. Using
  the data structure in~(\ref{item:TD}), we identify the corresponding
  bitvector $B_i$ in~(\ref{item:BD}) or $B'$ in~(\ref{item:FB}) along
  with the number $\hat s_i^b$ of occurrences of bit~$b$ in the preceding segments.
  In both cases, we query the corresponding
  dictionary and combine the result with $\hat s_i^b$. These operations take
  $O(1)$ time.

  Now we focus on $\Append(b)$. At every $\Append$ operation, we
  append a $\bitzero$ to the one of the bitvectors in~(\ref{item:TD})
  depending on whether $b$ is $\bitzero$ or $\bitone$, thus
  maintaining the partial sums invariant. This takes constant time. We
  guarantee that $|B'| \leq L$ bits: whenever $|B'| = L$, we
  conceptually create $B_{n_L+1} := B'$, still keeping its data
  structure in~(\ref{item:FB}); reset $B'$ to be empty, creating the
  corresponding data structure in~(\ref{item:FB}); append a $\bitone$
  to the bitvectors in~(\ref{item:TD}).  We start building a new
  static compressed data structure $\hat F_{n_L+1}$ for $B_{n_L+1}$
  using RRR in $O(L/\log n)$ steps of $O(1)$ time each.
  During the construction of $\hat F_{n_L+1}$ the old $B'$ is still
  valid, so it can be used to answer the queries. As soon as the
  construction is completed, in $O(L / \log n)$ time, the old $B'$ can
  be discarded and queries can be now handled by $\hat F_{n_L+1}$.
  Meanwhile the new appended bits are handled in the new
  $B'$, in $O(1)$ time each, using its new instance
  of~(\ref{item:FB}).  By suitably tuning the speed of the operations,
  we can guarantee that by the time the new reset $B'$ has reached
  $L/2$ (appended) bits, the above $O(L)$ steps have been completed
  for $\hat F_{n_L+1}$. Hence, the total cost of $\Append$ is just
  $O(1)$ time in the worst case.

  To complete our proof, we have to discuss what happens when
  we have to double $L := 2 \times L$. This is a standard task known
  as global rebuilding~\cite{Overmars:1983:DDD}. We rebuild RRR for
  the concatenation of $B_1$ and $B_2$, and deallocate the latter two
  after the construction; we then continue with RRR on the
  concatenation of $B_3$ and $B_4$, and deallocate them after the
  construction, and so on. Meanwhile, we build a copy~(\ref{item:TD}')
  of the data structure in~(\ref{item:TD}) for the new parameter $2
  \times L$, following an incremental approach. At any time, we only
  have~(\ref{item:TD}') and $\hat F_{2i-1}, \hat F_{2i}$
  duplicated. The implementation of $\Rank$ and $\Select$ needs a
  minor modification to deal with the already rebuilt segments.  The
  global rebuilding is completed before we need again to double the
  value of $L$.

  We now perform the space analysis. As for~(\ref{item:BD}), we have
  to add up the space taken by $\hat F_1, \ldots, \hat F_{n_L}$ plus
  that taken by the one being rebuilt using $\hat F_{2i-1}, \hat
  F_{2i}$. This sum can be upper bounded by $\sum_{i=1}^{n_L}
  (\logbinom(m_i,L) + o(L) ) + O(L) = H_0(\beta) + o(n)$.  The space
  for~(\ref{item:FB}) is $O(\polylog(n)) = o(n)$.  Finally, the
  occupancy of the $s_i^\bitone$ partial sums in~(\ref{item:TD}) is
  $\logbinom(n_L, n_L+m) + o(n_L+m) = O(n_L\log (1+m/n_L)) = O(n
  \log n / L) = o(n)$ bits, since the bitvector has length $n_L + m$ and contains $n_L$
  $\bitone$s. The analysis is symmetric for the $s_i^\bitzero$ partial
  sums, and for the copies in~(\ref{item:TD}').
\end{proof}

\subsection{Fully dynamic bitvectors}
\label{sub:fully-dynaminc-bitvectors}

We introduce a
new dynamic bitvector construction which, although the entropy term
has a constant greater than $1$, supports logarithmic-time $\Init$ and
$\Insert$/$\Delete$.

To support both insertion/deletion and initialization in logarithmic
time we adapt the dynamic bitvector presented in Section~3.4 of
\cite{DBLP:journals/talg/MakinenN08}; in the paper, the bitvector is
compressed using Gap Encoding, i.e. the bitvector
$\bitzero^{g_0}\bitone\bitzero^{g_1}\bitone\dots$
is encoded as the sequence of \emph{gaps} $g_0, g_1, \dots$, and
the gaps are encoded using Elias delta code
\cite{elias1975universal}. The resulting bit stream is split in chunks
of $\Theta(\log n)$ (without breaking the codes) and a self-balancing
binary search tree is built on the chunks, with partial counts in all
the nodes. Chunks are split and merged upon insertions and deletions
to maintain the chunk size invariant, and the tree rebalanced.

Because of gap encoding, the space has a linear dependence on
the number of \bitone{}s, hence by Remark~\ref{obs:init} it is
not suitable for our purposes. We make a simple modification that
enables an efficient $\Init$: in place of gap encoding and delta codes
we use RLE and Elias gamma codes \cite{elias1975universal}, as the
authors of \cite{DBLP:journals/talg/FoschiniGGV06} do in their
\emph{practical dictionaries}. RLE encodes the bitvector
$\bitzero^{r_0}\bitone^{r_1}\bitzero^{r_2}\bitone^{r_3}\dots$ with the
sequence of \emph{runs} $r_0, r_1, r_2, r_3, \dots$. The runs are
encoded with Elias gamma codes. 
$\Init(b, n)$ can be trivially supported by creating a tree with a
single leaf node, and encoding a run of $n$ bits $b$ in the node, which
can be done in time $O(\log n)$.
In
\cite{DBLP:journals/iandc/FerraginaGM09} it is proven the space of
this encoding is bounded by $O(n H_0)$, but even if the coefficient of
the entropy term is not $1$ as in RRR bitvectors, the experimental
analysis performed in \cite{DBLP:journals/talg/FoschiniGGV06} shows
that RLE bitvectors perform extremely well in practice.
The rest of
the data structure is left unchanged; we refer to
\cite{DBLP:journals/talg/MakinenN08} for the details.

\begin{theorem}
  The dynamic RLE$+\gamma$ bitvector supports $\Access$, $\Rank$, $\Select$,
  $\Insert$, $\Delete$, and $\Init$ on a bitvector $\beta$ in $O(\log
  n)$ time. The total space occupancy is $O(nH_0(\beta) + \log n)$ bits.
\end{theorem}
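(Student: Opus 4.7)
The plan is to verify the claimed bounds by adapting the data structure of M\"akinen and Navarro \cite{DBLP:journals/talg/MakinenN08} wholesale, changing only the local encoding of the chunks from Gap$+\delta$ to RLE$+\gamma$, and then separately handling the one genuinely new operation, $\Init$. Concretely, I would maintain a weight-balanced (AVL or red-black) binary search tree whose leaves store chunks of $\Theta(\log n)$ bits of the encoded bitstream (without breaking a $\gamma$-code), and whose internal nodes cache the number of encoded bits and the number of $\bitone$s in their subtrees. Since the internal node counters are each $O(\log n)$ bits and there are $O(nH_0(\beta)/\log n)$ chunks by the Ferragina--Giancarlo--Manzini bound \cite{DBLP:journals/iandc/FerraginaGM09}, the tree overhead is $O(nH_0(\beta))$, and the additive $O(\log n)$ in the space bound absorbs the degenerate case where the total encoded length is sub-logarithmic.

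For $\Access$, $\Rank$, and $\Select$, I would follow the standard top-down navigation on the partial sums in the tree ($O(\log n)$ steps), and then decode runs within the located $\Theta(\log n)$-bit chunk by scanning $\gamma$-codes, which takes $O(\log n)$ time (and in fact $O(\log n / \log \log n)$ via Four-Russians tables, but the weaker bound suffices). $\Insert$ and $\Delete$ are handled exactly as in \cite{DBLP:journals/talg/MakinenN08}: locate the chunk, locally patch the RLE encoding (a bit flip merges or splits at most two adjacent runs, changing only $O(1)$ $\gamma$-codes), update the cached counters on the root-to-leaf path, split/merge chunks to maintain the size invariant, and rebalance. All of this is $O(\log n)$ time per operation.

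The key new ingredient, $\Init(b, n)$, is where the switch from $\delta$ to $\gamma$ is crucial, and this is what Remark~\ref{obs:init} was warning against for Gap encoding. A constant bitvector of length $n$ is a single RLE run, so its $\gamma$-encoding occupies only $O(\log n)$ bits; I would implement $\Init(b, n)$ by allocating a single-leaf tree whose chunk contains that one $\gamma$-code (with a one-bit leading indicator for $b$), and set the root counters to $n$ and $nb$. Constructing and writing out a $\gamma$-code for $n$ clearly takes $O(\log n)$ time, meeting the bound.

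The main obstacle I anticipate is the space accounting, not the operation design. The M\"akinen--Navarro analysis gives an $nH_0(\beta) + o(n)$ bound assuming $\delta$ encoding matches the information-theoretic lower bound term-by-term, whereas RLE$+\gamma$ only guarantees $O(nH_0(\beta))$ with a constant strictly greater than $1$ \cite{DBLP:journals/iandc/FerraginaGM09}; I therefore need to verify that inflating the encoding by a constant factor does not disrupt the tree-rebalancing amortization or the chunk-size invariants, and that the cached partial sums and rebalancing pointers still add up to $O(nH_0(\beta)) + O(\log n)$ rather than something like $\Theta(n)$. This amounts to checking that the number of chunks is $O(nH_0(\beta)/\log n) + O(1)$ at all times, with the additive $O(1)$ (equivalently, the $O(\log n)$ bits) accounting for the \emph{single-run} case produced by $\Init$ or by a heavily compressible input.
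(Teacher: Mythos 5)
Your proposal is correct and follows essentially the same route as the paper: the paper likewise takes the Mäkinen--Navarro dynamic bitvector verbatim, swaps Gap$+\delta$ for RLE$+\gamma$ chunk encoding, implements $\Init(b,n)$ as a single-leaf tree holding one $\gamma$-encoded run (hence the $O(\log n)$ additive space term), and invokes the Ferragina--Giancarlo--Manzini bound for the $O(nH_0(\beta))$ space. Your extra care about the space accounting under the constant-factor inflation is a reasonable elaboration of what the paper leaves implicit.
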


\section{Other Query Algorithms}
\label{sec:query-algor-wavel}

In this section we describe range query algorithms on the Wavelet Trie
that can be useful in particular in database applications and
analytics. We note that the algorithms for \emph{distinct values in
  range} and \emph{range majority element} are similar to the
\emph{report} and \emph{range quantile} algorithms presented in
\cite{Gagie2010}; we restate them here for completeness, extending
them to prefix operations.
In the following we denote with $\Cop$ the cost of
$\Access$/$\Rank$/$\Select$ on the bitvectors; $\Cop$ is $O(1)$ for
static and append-only bitvectors, and $O(\log n)$ for fully
dynamic bitvectors.

\smallskip
\ttlpar{Sequential access.}
Suppose we want to enumerate all the strings in the range $[l, r)$,
i.e. $S^{[l, r)}=S_l, \dots, S_{r-1}$. We could do it with $r - l$ calls to
$\Access$, but accessing each string $s_i$ would cost $O(|s_i| +
h_{s_i}\Cop)$. We show instead how to enumerate the values of a
range by enumerating the bits of each bitvector: suppose we have an
iterator on root bitvector for the range $[l, r)$. Then if the current
bit is $0$, the next value is the next value given by the left
subtree, while if it is $1$ the next value is the next value of the
right subtree. We proceed recursively by keeping an iterator on
all the bitvectors of the internal nodes we traverse during the
enumeration. 

When we traverse an internal node for the first time, we perform a
$\Rank$ to find the initial point, and create an iterator. Next time
we traverse it, we just advance the iterator. Note that both RRR
bitvectors and RLE bitvectors can support iterators with $O(1)$
advance to the next bit.

By using iterators instead of performing a $\Rank$ each time we
traverse a node, a single $\Rank$ is needed for each traversed node, 
hence to extract the $i$-th string it takes
$O(|s_i| + \frac{1}{r - l}\sum_{s \in S^{[l,
    r)}_{\text{set}}}h_{s}\Cop)$ amortized time.

Note that if $S^{[l, r)}_{\text{set}}$ (the set of distinct strings occurring
in $S^{[l, r)}$) is large, the actual time is smaller due to shared nodes
in the paths. In fact, in the extreme case when the whole string set
occurs in the range, we can give a better bound, $O(|s_i| + \frac{1}{r
  - l}|\Sset|\Cop)$ amortized time.

\smallskip
\ttlpar{Distinct values in range.}
Another useful query is the enumeration of the distinct values in the
range $[l, r)$, which we called $S^{[l, r)}_{\text{set}}$. Note that
for each node the distinct values of the subsequence represented by
the node are just the distinct values of the left subtree plus the
distinct values of the right subtree in the corresponding
ranges. Hence, starting at the root, we compute the number of
$\bitzero$s in the range $[l, r)$ with two calls to $\Rank$. If there
are no $\bitzero$s we just enumerate the distinct elements of the
right child in the range $[\Rank(\bitone, l), \Rank(\bitone, r))$. If
there are no $\bitone$s, we proceed symmetrically. If there are both
$\bitzero$s and $\bitone$s, the distinct values are the union of the
distinct values of the left child in the range $[\Rank(\bitzero, l),
\Rank(\bitzero, r))$ and those of the right child in the range
$[\Rank(\bitone, l), \Rank(\bitone, r))$.
Since we only traverse nodes that lead to values that are in the
range, the total running time is $O(\sum_{s \in S^{[l,
    r)}_{\text{set}}}|s| + h_{s}\Cop)$, which is the same time as
accessing the values, if we knew their positions. As a byproduct, we
also get the number of occurrences of each value in the range.

We can stop early in the traversal, hence enumerating the
distinct \emph{prefixes} that satisfy some property. For example in an
URL access log we can find efficiently the distinct hostnames in a
given time range.

\smallskip
\ttlpar{Range majority element.}
The previous algorithm can be modified to check if there is a majority
element in the range (i.e. one element that occurs more than $\frac{r
  - l}{2}$ times in $[l, r)$), and, if there is such an element, find it. Start at
the root, and count the number of $\bitzero$s and $\bitone$s in the
range. If a bit $b$ occurs more than $\frac{r - l}{2}$ times (note
that there can be at most one) proceed recursively on the $b$-labeled
subtree, otherwise there is no majority element in the range.

The total running time is $O(h \Cop)$, where $h$ is the height of
the Wavelet Trie. In case of success, if the string found is $s$, the
running time is just $O(h_s \Cop)$.
As for the distinct values, this can be applied to prefixes as well by
stopping the traversal when the prefix we found until that point
satisfies some property.

A similar algorithm can be used as an heuristic to find all the values
that occur in the range at least $t$ times, by proceeding as in the
enumeration of distinct elements but discarding the branches whose bit
has less than $t$ occurrences in the parent. While no theoretical
guarantees can be given, this heuristic should perform very well with
power-law distributions and high values of $t$, which are the cases of
interest in most data analytics applications.

\section{Probabilistically-Balanced Dynamic Wavelet Trees}\label{sec:probabilistic-balanced-dynamic-wt}

In this section we show how to use the Wavelet Trie to
maintain a dynamic Wavelet Tree on a sequence from a bounded alphabet
with operations that with high probability do not depend on the universe size.

A compelling example is given by numeric data: a sequence of 
integers, say in $\{0, \dots, 2^{64} - 1\}$, cannot be represented with
existing dynamic Wavelet Trees unless the tree is built on the whole
universe, even if the sequence only contains integers from a much
smaller subset. Similarly, a text sequence in Unicode typically contains
few hundreds of distinct characters, far fewer than the
$\approx 2^{17}$ (and growing) defined in the standard. 

Formally, we wish to maintain a sequence of symbols $S = \langle s_0, \dots, s_{n-1}\rangle$
drawn from an alphabet $\Sigma \subseteq U = \{0, \dots, u - 1\}$, where
we call $U$ the \emph{universe} and $\Sigma$ the \emph{working
  alphabet}, with $\Sigma$ typically much smaller than $U$ and not
known a priori. We want to support the standard $\Access$, $\Rank$,
$\Select$, $\Insert$, and $\Delete$ but we are willing to give up
$\RankPrefix$ and $\SelectPrefix$, which would not make sense anyway
for non-string data.

We can immediately use the Wavelet Trie on $S$, by mapping injectively
the values of $U$ to strings of length $\lceil \log u \rceil$. This supports all the
required operations with a space bound that depends only
logarithmically in $u$, but the height of the resulting trie could be
as much as $\log u$, while a balanced tree would require only $\log
|\Sigma|$.

To obtain a balanced tree without having to deal with complex
rotations we employ a simple randomized technique that will yield a
balanced tree with high probability.
The main
idea is to randomly permute the universe $U$ with an easy to compute
permutation, such that the probability that the alphabet $\Sigma$ will
produce an unbalanced trie is negligibly small.

To do so we use the hashing technique described in
\cite{universalhash97}. We map the universe $U$ onto itself by the
function $h_a(x) = a x \pmod{2^{\lceil \log u \rceil}}$ where $a$ is
chosen at random among the odd integers in $[1, 2^{\lceil \log u
  \rceil} - 1]$ when the data structure is initialized. Note that
$h_a$ is a bijection, with the inverse given by $h^{-1}(x) = a^{-1} x
\pmod{2^{\lceil \log u \rceil}}$. The result of the hash function is
considered as a binary string of $\lceil \log u \rceil$ bits written
LSB-to-MSB, and operations on the Wavelet Tree are defined by
composition of the hash function with operations on the Wavelet Trie;
in other words, the values are hashed and inserted in a Wavelet Trie,
and when retrieved the hash inverse is applied.

To prove that the resulting trie is balanced we use the following
lemma from \cite{universalhash97}.
\begin{lemma}[\cite{universalhash97}]
  Let $\Sigma \subseteq U$ be any subset of the universe, and $\ell =
  \lceil (\alpha + 2)\log |\Sigma| \rceil$ so that $\ell \leq \lceil \log u
  \rceil$. Then the following holds
  \begin{equation*}
    \Prob\big(  \forall x, y \in \Sigma \quad h_a(x) \not\equiv h_a(y) \pmod{2^\ell} \big) \geq 1 - |\Sigma|^{-\alpha}
  \end{equation*}
  where the probability is on the choice of $a$.
\end{lemma}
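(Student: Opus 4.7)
The plan is to apply the classical analysis of multiplicative hashing by a random odd multiplier (an almost-universal family), combined with a union bound over all unordered pairs in $\Sigma$. The per-pair event to control, in the notation of the lemma, is the collision $h_a(x) \equiv h_a(y) \pmod{2^\ell}$ for $x \neq y \in \Sigma$; establishing a bound of $O(2^{-\ell})$ on this probability, together with a union bound over the $\binom{|\Sigma|}{2}$ pairs, will close the argument once we plug in $\ell = \lceil (\alpha+2)\log|\Sigma|\rceil$.

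I would first fix a pair $x \neq y$ in $\Sigma$, let $w = \lceil \log u \rceil$, and study $d = (x - y) \bmod 2^w$, which is nonzero. Writing $d = 2^k d'$ with $d'$ odd and $0 \leq k < w$, I would use that multiplication by the odd element $d'$ is a bijection on the odd residues modulo $2^{w-k}$. Hence, as $a$ ranges uniformly over the $2^{w-1}$ odd residues mod $2^w$, the product $ad \bmod 2^w$ is uniformly distributed over a set of $2^{w-1-k}$ values, each an odd multiple of $2^k$. Counting how many such values fall inside the $2^{w-\ell}$-size ``collision window'' determined by the first $\ell$ trie-bits agreeing then yields a per-pair collision probability of at most $2 \cdot 2^{-\ell}$.

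Finally, a union bound over the at most $|\Sigma|^2/2$ unordered pairs in $\Sigma$ gives an overall collision probability of at most $|\Sigma|^2 \cdot 2^{-\ell}$. Substituting $\ell = \lceil (\alpha+2) \log|\Sigma|\rceil$, so that $2^\ell \geq |\Sigma|^{\alpha+2}$, shrinks this to $|\Sigma|^{-\alpha}$; complementing gives the stated lower bound $1 - |\Sigma|^{-\alpha}$ on the probability that \emph{no} pair collides.

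The main obstacle is the per-pair bound, for two related reasons. First, one must split on the $2$-adic valuation $k$ of $d$: when $k$ is large the number of possible products $ad \bmod 2^w$ shrinks, which could in principle concentrate probability in a single bucket, so the argument has to verify that the effective collision window shrinks at least as fast. Second, one must align the analytic event ``$ad \bmod 2^w$ lies in a certain residue class'' with the trie-relevant event of first-$\ell$-bits-agreeing under the LSB-to-MSB convention in use; once that correspondence is pinned down, the counting is routine and the union-bound step is mechanical, but getting the constant right (so that the final inequality comes out to the clean form $|\Sigma|^{-\alpha}$) demands some care with the boundary cases $k \approx \ell$ and $k \approx w$.
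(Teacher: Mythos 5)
First, a point of reference: the paper does not actually prove this lemma --- it is imported verbatim from the cited source --- so your proposal can only be measured against that source. Your overall architecture (a per-pair collision bound for the random-odd-multiplier family via the $2$-adic valuation $k$ of $d = x-y$, the observation that $ad \bmod 2^w$ is uniform over the $2^{w-1-k}$ odd multiples of $2^k$, a union bound over $\binom{|\Sigma|}{2}$ pairs, and the substitution $2^\ell \geq |\Sigma|^{\alpha+2}$) is exactly the argument of the cited reference, and the uniformity claim and the closing arithmetic are correct.

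However, the step you defer as ``routine once the correspondence is pinned down'' is precisely the step that cannot be completed for the statement as written, and it deserves more than a promissory note. The event $h_a(x) \equiv h_a(y) \pmod{2^\ell}$ with $h_a(x) = ax \bmod 2^{w}$ is the event $2^\ell \mid a(x-y)$, which for odd $a$ is equivalent to $2^\ell \mid (x-y)$: it does not depend on $a$ at all, so no per-pair bound of the form $O(2^{-\ell})$ exists (take $x - y = 2^\ell$ and the pair collides for every $a$, making the stated probability $0$ rather than $\geq 1-|\Sigma|^{-\alpha}$). Your contiguous ``collision window'' of $2^{w-\ell}$ residues is the right object for a different event, namely that the \emph{top} $\ell$ bits of $ax \bmod 2^w$ and $ay \bmod 2^w$ agree, i.e., $\lfloor (ax \bmod 2^w)/2^{w-\ell}\rfloor = \lfloor (ay \bmod 2^w)/2^{w-\ell}\rfloor$; that is the version the cited reference proves, with per-pair probability at most $2^{1-\ell}$. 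The residues divisible by $2^\ell$ form a set of the same cardinality $2^{w-\ell}$ but it is periodic rather than contiguous, and the uniform distribution of $ad \bmod 2^w$ over odd multiples of $2^k$ either avoids it entirely or is contained in it, never anything in between. So to make your proof (and the paper's application) go through you must work with the high-order bits --- equivalently, feed the trie the bit-reversal of $ax \bmod 2^w$, i.e., MSB-to-LSB rather than the LSB-to-MSB convention the paper declares. With that correction your argument is complete; without it, the boundary-case care you promise cannot rescue the $\bmod\ 2^\ell$ formulation.
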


In our case, the lemma implies that with very high probability the hashes
of the values in $\Sigma$ are distinguished by the first $\ell$ bits,
where $\ell$ is logarithmic in the $|\Sigma|$. The trie on the
hashes cannot be higher than $\ell$, hence it is balanced.
The space occupancy is that of the Wavelet Trie built on the
hashes. We can bound $L$, the sum of trie labels in
Theorem~\ref{thm:ltbound}, by the total sum of the hashes length,
hence proving the following theorem.

\begin{theorem}
  The randomized Wavelet Tree on a dynamic sequence $S = \langle s_0, \dots,
  s_{n-1}\rangle$ where $s_i \in \Sigma \subseteq U = \{0, \dots, u - 1\}$
  supports the operations $\Access$, $\Rank$, $\Select$, $\Insert$,
  and $\Delete$ in time $O(\log u + h \log n)$, where $h \leq (\alpha + 2)\log
  |\Sigma|$ with probability $1 - |\Sigma|^{-\alpha}$ (and $h \leq
  \lceil \log u \rceil$ in the worst case).

  The total space occupancy is $O(nH_0(S) + |\Sigma| w) + |\Sigma| \log u$ bits.
\end{theorem}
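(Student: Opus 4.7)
The plan is to derive the time and space bounds by combining the analysis of the hash function with the bounds for the fully dynamic Wavelet Trie established in Theorem~\ref{the:dynamic-wt}. The construction is to hash each value $s_i \in U$ to a binary string $h_a(s_i)$ of length $\lceil \log u \rceil$ and feed these strings to a fully dynamic Wavelet Trie. Since $h_a$ is a bijection on $U$, no collisions occur, so $\Rank$/$\Select$/$\Insert$/$\Delete$ of a value $x$ reduce to the corresponding Wavelet Trie operation on $h_a(x)$, while $\Access$ retrieves a hashed string from the trie and then applies $h_a^{-1}$.

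First I would bound the time. Each hashing step $h_a(x)$ or $h_a^{-1}(x)$ is a single modular multiplication on $\lceil \log u \rceil$-bit integers, which is performed in $O(\log u)$ time in the word RAM model. The Wavelet Trie operation itself, by Theorem~\ref{the:dynamic-wt}, costs $O(|s| + h_s \log n)$; since every string fed to the trie has length exactly $\lceil \log u \rceil$, this is $O(\log u + h_s \log n)$. Taking $h = \max_s h_s$ to be the height of the trie yields the claimed $O(\log u + h \log n)$ bound. For the height, I would directly invoke the cited lemma with $\ell = \lceil (\alpha+2) \log|\Sigma|\rceil$: with probability at least $1 - |\Sigma|^{-\alpha}$, the hashes of all elements of $\Sigma$ differ in their first $\ell$ bits (read LSB-to-MSB), so the Patricia Trie on these hashes has depth at most $\ell$, giving $h \leq (\alpha+2)\log|\Sigma|$. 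In the worst case the depth cannot exceed the string length $\lceil \log u \rceil$, which gives the deterministic bound.

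For the space, by Theorem~\ref{the:dynamic-wt} the fully dynamic Wavelet Trie occupies $O(nH_0(S') + |\Sset|w) + |L|$ bits, where $S'$ is the hashed sequence, $\Sset$ the set of distinct hashed strings, and $L$ the concatenation of Patricia Trie node labels. Since $h_a$ is a bijection, $H_0(S') = H_0(S)$ and $|\Sset| \leq |\Sigma|$. For $|L|$, I would observe that each bit of each distinct string either lies on the implicit branching position of some node or belongs to the label $\alpha$ of exactly one node on its root-to-leaf path; hence $|L| \leq \sum_{s \in \Sset} |s| \leq |\Sigma|\lceil\log u\rceil$, yielding the $|\Sigma|\log u$ term in the statement.

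The only non-routine step is checking that composing the hashing with the Wavelet Trie preserves the bounds as stated; the main obstacle is keeping careful track of which complexity term comes from which source (hash computation vs.\ trie navigation vs.\ bitvector operations), especially since $\log u$ can be larger than $\log n$ in principle, so one should not fold it into the $h \log n$ term. Once this bookkeeping is in place, everything else follows by direct substitution into the already-proven results and the cited hashing lemma.
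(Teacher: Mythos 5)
Your proposal is correct and follows essentially the same route as the paper: reduce to the fully dynamic Wavelet Trie of Theorem~\ref{the:dynamic-wt} applied to the $\lceil \log u \rceil$-bit LSB-to-MSB hashed strings, invoke the cited hashing lemma to bound the trie height by $\ell = \lceil(\alpha+2)\log|\Sigma|\rceil$ with probability $1 - |\Sigma|^{-\alpha}$, and bound $|L|$ by the total length of the hashes. Your bookkeeping of which term comes from which source (hashing vs.\ trie navigation vs.\ bitvector operations) is in fact slightly more explicit than the paper's own brief justification.
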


\section{Conclusion and Future Work}

We have presented the Wavelet Trie, a new data structure for
maintaining compressed sequences of strings with provable time and
compression bounds. We believe that the Wavelet Trie will find
application in real-world storage problems where space-efficiency is
crucial. To this end, we plan to evaluate the practicality of the data
structure with an experimental analysis on real-world data,
evaluating several performance/space/functionality trade-offs. We are
confident that a properly engineered implementation can perform well,
as other algorithm engineered succinct data structures have proven
very practical (\cite{sadaalx07, bpalx10, compressedtries12}).

It would be also interesting to \emph{balance} the Wavelet Trie, even
for pathological sets of strings. In \cite{compressedtries12} it was
shown that in practice the cost of unbalancedness can be high. 
Lastly, it is an open question how the Wavelet Trie would perform in
external or cache-oblivious models. A starting point would be a
fanout larger than $2$ in the trie, but internal nodes would require
vectors with non-binary alphabet. The existing non-binary dynamic
sequences do not directly support $\Init$, hence they are not suitable
for the Wavelet Trie.

\section*{Acknowledgments}
We would like to thank Ralf Herbrich for suggesting the problem of
compressing relations, and Rossano Venturini for numerous insightful
discussions. We would also like to thank the anonymous reviewers for
several helpful comments and suggestions which helped improving the
presentation.

\bibliographystyle{abbrv}
\bibliography{wavelet_trie_paper}

\newpage
\appendix
\bigskip

\section*{APPENDIX}

\section{Multiple static bitvectors}\label{sec:multipleRRR}

\begin{lemma}\label{lem:entropymin}
  Let $S$ be a sequence of length $n$ on an alphabet of cardinality
  $\sigma$, with each symbol of the alphabet occurring at least
  once. Then the following holds:
  \begin{equation*}
    n H_0(S) \geq (\sigma - 1) \log n \text{.}
  \end{equation*}
\end{lemma}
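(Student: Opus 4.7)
The plan is to rewrite $nH_0(S)$ in a form amenable to minimization, and then minimize it over the polytope of admissible frequency vectors.

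First I would expand
\[
nH_0(S) = \sum_{c \in \Sigma} n_c \log \frac{n}{n_c},
\]
so that the claim becomes $\sum_{c} n_c \log(n/n_c) \geq (\sigma-1)\log n$ under the constraints $\sum_c n_c = n$ and $n_c \geq 1$ for every $c$.

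Next I would observe that the single-variable function $f(x) = x \log(n/x)$ is concave on $(0,n]$ (its second derivative equals $-1/(x\ln 2)$), so the objective $\sum_c f(n_c)$ is a concave function of the vector $(n_c)_{c\in\Sigma}$. The feasible region $\{(n_c) : n_c \geq 1,\ \sum_c n_c = n\}$ is a bounded convex polytope, and a concave function on such a region attains its minimum at a vertex. The vertices are precisely the vectors where all but one coordinate equals $1$ and the remaining coordinate equals $n - (\sigma-1)$.

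Evaluating at any such vertex gives
\[
\sum_c f(n_c) = (\sigma-1)\cdot 1 \cdot \log n + (n-\sigma+1)\log\frac{n}{n-\sigma+1} \geq (\sigma-1)\log n,
\]
since the second summand is non-negative (as $n \geq n-\sigma+1 > 0$, the argument of the log is $\geq 1$). This yields the desired bound.

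The only delicate step is invoking the "concave function over a polytope is minimized at a vertex" principle cleanly; if one prefers an elementary argument, the same conclusion follows by a direct exchange argument: whenever two frequencies $n_c, n_{c'}$ both exceed $1$, replacing them by $(n_c + n_{c'} - 1, 1)$ keeps the sum equal to $n$, keeps all entries $\geq 1$, and, by concavity of $f$ together with $f(1) = \log n \geq f(n_c) + f(n_{c'}) - f(n_c + n_{c'} - 1)$ after rearrangement, does not increase $\sum_c f(n_c)$. Iterating reduces to the vertex case already computed, completing the proof.
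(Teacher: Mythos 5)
Your proof is correct and follows essentially the same route as the paper: both minimize the concave function $\sum_{c} n_c \log(n/n_c)$ over admissible frequency vectors and reduce to the extremal configuration in which $\sigma-1$ symbols occur exactly once, your polytope--vertex formulation being a cleaner packaging of the paper's pairwise exchange argument. One small slip in your supplementary exchange argument: the inequality should read $f(1) \leq f(n_c) + f(n_{c'}) - f(n_c + n_{c'} - 1)$, i.e.\ $f(1) + f(n_c + n_{c'} - 1) \leq f(n_c) + f(n_{c'})$ (the standard ``spreading apart decreases a concave sum'' inequality); as written the direction is reversed and would assert that the exchange does not \emph{decrease} the sum.
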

\begin{proof}
  The inequality is trivial when $\sigma = 1$. When there are at least
  two symbols, the minimum entropy is attained when $\sigma - 1$
  symbols occur once and one symbol occurs the remaining $n - \sigma +
  1$ times. To show this, suppose by contradiction that the minimum
  entropy is attained by a string where two symbols occur more than
  once, occurring respectively $a$ and $b$ times. Their contribution
  to the entropy term is $a \log \frac{n}{a} + b \log
  \frac{n}{b}$. This contribution can be written as $f(a)$ where 
  \[
  f(t) = t \log \frac{n}{t} + (b + a - t) \log \frac {n}{b + a - t},
  \]
  but
  $f(t)$ has two strict minima in $1$ and $b + a - 1$ among the
  positive integers, so the entropy term can be lowered by making one
  of the symbol absorb all but one the occurrences of the other,
  yielding a contradiction.

  To prove the lemma, it is sufficient to see that the contribution to
  the entropy term of the $\sigma - 1$ singleton symbols is $(\sigma -
  1) \log n$.
\end{proof}

\begin{lemma}\label{lem:sigmanegligible}
  $O(|\Sset|)$ is bounded by $o(\tilde h n)$.
\end{lemma}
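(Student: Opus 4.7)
The plan is to chain together the two lemmas already proved in this appendix with the lower bound on $\tilde h$ from Lemma~\ref{lem:hbound}, and deduce the claim from the fact that $\log n \to \infty$.

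First I would apply Lemma~\ref{lem:entropymin} with $\sigma = |\Sset|$. This is legitimate because $\Sset$ is by definition the set of \emph{distinct} strings actually appearing in $S$, so every symbol of the induced alphabet occurs at least once in $S$. The lemma then yields
\begin{equation*}
  n H_0(S) \;\geq\; (|\Sset| - 1) \log n.
\end{equation*}

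Next I would invoke Lemma~\ref{lem:hbound}, which states $H_0(S) \leq \tilde h$, to upgrade the left-hand side:
\begin{equation*}
  \tilde h\, n \;\geq\; n H_0(S) \;\geq\; (|\Sset| - 1) \log n.
\end{equation*}
Rearranging gives $|\Sset| \leq \tilde h n / \log n + 1$, so $|\Sset| = O(\tilde h n / \log n)$, which is $o(\tilde h n)$ as $n$ grows since $1/\log n \to 0$.

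The only subtlety is the boundary case $|\Sset| = 1$: then $H_0(S) = 0$ and $\tilde h = 0$, so the ratio on the right is degenerate. But in that regime $O(|\Sset|) = O(1)$ is a constant, which can be absorbed into the standard $o(\cdot)$ redundancy terms of the overall analysis (indeed, the wavelet trie for a constant sequence is a single leaf, so no internal bitvectors contribute). Assuming $|\Sset| \geq 2$, the bound above suffices, and this is the only nontrivial step; the rest is routine manipulation.
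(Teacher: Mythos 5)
Your proof is correct and follows essentially the same route as the paper's: both chain Lemma~\ref{lem:entropymin} (with $\sigma = |\Sset|$) and Lemma~\ref{lem:hbound} to get $\tilde h n \geq n H_0(S) \geq (|\Sset|-1)\log n$ and then divide through, under the same assumption $|\Sset| \geq 2$. Your explicit treatment of the degenerate case $|\Sset| = 1$ is a small bonus the paper leaves implicit.
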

\begin{proof}
  It suffices to prove that
  \begin{equation*}
    \frac{|\Sset|}{\tilde h n}
  \end{equation*}
  is asymptotic to $0$ as $n$ grows. By Lemma~\ref{lem:hbound} and
  Lemma~\ref{lem:entropymin}, and assuming $|\Sset| \geq 2$,
  \begin{equation*}
    \frac{|\Sset|}{\tilde h n} \leq \frac{|\Sset|}{n H_0(S)} \leq \frac{|\Sset|}{(|\Sset| - 1) \log n} \leq \frac{2}{\log n} \text{,}
  \end{equation*}
  which completes the proof.
\end{proof}

\begin{lemma}\label{lem:redundancysum}
  The sum of the redundancy of $\sigma$ RRR bitvectors of $m_1, \dots, m_\sigma$
  bits respectively, where $\sum_i m_i = m$, can be bounded by
  \begin{equation*}
    O\left(m \frac{\log \log \frac{m}{\sigma}}{\log \frac{m}{\sigma}} + \sigma\right) \text{.}
  \end{equation*}
\end{lemma}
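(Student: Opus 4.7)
The plan is to bound the per-bitvector RRR redundancy and then sum it up using two simple monotonicity facts about the function $h(y) := y \log\log y / \log y$. Recall the standard guarantee that one RRR bitvector of $n$ bits has redundancy at most $c_1 \cdot n \log\log n/\log n$ when $n \geq c_0$, and $O(1)$ when $n < c_0$, for suitable constants $c_0, c_1$.

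First I would handle the degenerate situations. The indices $i$ with $m_i < c_0$ contribute only $O(1)$ redundancy each, hence $O(\sigma)$ in total, which is already absorbed by the additive $\sigma$ term in the target bound. Similarly, setting $T := m/\sigma$, if $T < c_0$ then $m \leq c_0 \sigma = O(\sigma)$ and the claim is vacuous; so I may assume both $m_i \geq c_0$ throughout and $T \geq c_0$.

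The two key calculus facts, which I would establish by direct differentiation of $y/\log y$ and $\log\log y/\log y$, are that on $[c_0,\infty)$: (a) $h$ is increasing, and (b) $h(y)/y = \log\log y/\log y$ is decreasing. With these in hand, I would partition the remaining bitvectors by comparing $m_i$ with $T$. For $m_i \geq T$, property (b) yields $h(m_i) \leq m_i\cdot h(T)/T$, so
\[
\sum_{m_i \geq T} h(m_i) \;\leq\; \frac{h(T)}{T}\sum_{m_i \geq T} m_i \;\leq\; m\cdot\frac{h(T)}{T}.
\]
For $c_0 \leq m_i < T$, property (a) yields $h(m_i) \leq h(T)$, and since there are at most $\sigma$ such indices,
\[
\sum_{m_i < T} h(m_i) \;\leq\; \sigma\, h(T) \;=\; m\cdot\frac{h(T)}{T}.
\]
Combining these two halves with the $O(\sigma)$ slack from the small-bitvector case gives total redundancy $O\bigl(m\cdot h(T)/T + \sigma\bigr) = O\bigl(m\log\log(m/\sigma)/\log(m/\sigma) + \sigma\bigr)$, as required.

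The main obstacle, such as it is, is the routine derivative computation needed to pin down (a) and (b) and to choose $c_0$ large enough that both properties and the RRR per-bitvector bound hold simultaneously. Once $c_0$ is fixed this way, the remainder of the argument uses only monotonicity (no concavity or Jensen-style inequality is required) and reduces to the two-line sum above.
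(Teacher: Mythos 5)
Your argument is correct, but it takes a genuinely different route from the paper. The paper's proof is a one-liner: it asserts that $f(x) = x\log\log x/\log x$ is concave and applies Jensen's inequality to the per-bitvector redundancies $c_1 f(m_i) + c_2$, immediately obtaining the average $c_1 f(m/\sigma) + c_2$ and hence the stated bound after multiplying by $\sigma$. You instead avoid concavity altogether: you split the indices by comparing $m_i$ with $T = m/\sigma$ and use only that $h(y)=y\log\log y/\log y$ is increasing while $h(y)/y$ is decreasing on $[c_0,\infty)$, bounding each half by $m\cdot h(T)/T$ and absorbing the sub-$c_0$ bitvectors into the $O(\sigma)$ term. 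Both arguments yield the same bound (yours with an extra factor of $2$ hidden in the $O$). What your version buys is robustness: verifying the two monotonicity facts is a lighter computation than verifying concavity of $f$ (which in fact only holds for $x$ above a threshold, a point the paper's proof glosses over, just as it glosses over the fact that $f$ is not even well-defined or positive for very small $m_i$); your explicit treatment of the $m_i < c_0$ case closes that gap cleanly. What the paper's version buys is brevity and the fact that Jensen directly exploits $\sum_i m_i = m$ without any case split. One small nit on your write-up: in the degenerate case $T < c_0$ the claim is not literally ``vacuous''---you still need the observation that each RRR redundancy is $O(m_i)+O(1)$ so the total is $O(m+\sigma)=O(\sigma)$ there---but that is a one-line fix and does not affect correctness.
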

\begin{proof}
  The redundancy of a single bitvector can be bounded by $c_1
  \frac{m_i \log \log m_i}{\log m_i} + c_2$. Since $f(x) = \frac{x
    \log\log x}{\log x}$ is concave, we can apply the Jensen
  inequality:
  \begin{equation*}
    \frac{1}{\sigma}\sum_{i} \left( c_1 \frac{m_i \log \log m_i}{\log m_i} + c_2 \right) \leq c_1 \frac{\frac{m}{\sigma} \log \log \frac{m}{\sigma}}{\log \frac{m}{\sigma}} + c_2 \text{.}
  \end{equation*}

  The result follows by multiplying both sides by $\sigma$.
\end{proof}

\begin{lemma}\label{lem:redundancywt}
  The redundancy of the RRR bitvectors in $WT(S)$ can be bounded by $o(\tilde h n)$.
\end{lemma}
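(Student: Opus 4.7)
The plan is to plug Lemma~\ref{lem:redundancysum} directly into the Wavelet Trie setting with the right choice of parameters, and then show each of the two resulting terms is $o(\tilde h n)$ using the bounds already established.

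First I would observe the two structural facts that fix the parameters of Lemma~\ref{lem:redundancysum}. Since the underlying Patricia Trie is a binary tree in which every internal node has two children, there are at most $|\Sset|-1$ internal nodes, hence at most $|\Sset|-1$ RRR bitvectors. By the very definition of the average height, the sum of the lengths of all these bitvectors equals $\tilde h n$, because each $s_i$ contributes exactly one bit per internal node on its root-to-leaf path. I would also dispatch the trivial case $|\Sset|=1$ separately, in which the trie has no internal node and no bitvector, so the claim holds vacuously.

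Now assume $|\Sset|\ge 2$, set $\sigma := |\Sset|-1$ and $m := \tilde h n$, and apply Lemma~\ref{lem:redundancysum} to get the global redundancy bound
\begin{equation*}
O\!\left( \tilde h n \cdot \frac{\log \log (\tilde h n/|\Sset|)}{\log (\tilde h n/|\Sset|)} + |\Sset| \right).
\end{equation*}
The second term is already $o(\tilde h n)$ by Lemma~\ref{lem:sigmanegligible}, so the whole argument reduces to showing that the first term is $o(\tilde h n)$, i.e.\ that the ratio $\tilde h n/|\Sset|$ tends to infinity with $n$.

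For this I would chain the two lower bounds I already have: Lemma~\ref{lem:hbound} gives $\tilde h \ge H_0(S)$, hence $\tilde h n \ge n H_0(S)$, and Lemma~\ref{lem:entropymin} gives $n H_0(S) \ge (|\Sset|-1)\log n$. Dividing by $|\Sset|$ yields $\tilde h n/|\Sset| \ge \tfrac{|\Sset|-1}{|\Sset|}\log n = \Omega(\log n)$, so $\log(\tilde h n/|\Sset|) = \Omega(\log\log n)$ and therefore $\log\log(\tilde h n/|\Sset|)/\log(\tilde h n/|\Sset|) = o(1)$. This gives exactly the first term $=o(\tilde h n)$, completing the proof.

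The only subtle step is the third one: getting a nontrivial lower bound on $\tilde h n/|\Sset|$ that survives the pathological regime where $|\Sset|$ is large relative to $n$ (potentially $\Theta(n)$). The chain via $n H_0(S)\ge (|\Sset|-1)\log n$ is exactly what rescues us, so the main work is really just noting that the same lemmas used elsewhere in Appendix~\ref{sec:multipleRRR} also lower-bound $\tilde h n/|\Sset|$ by $\log n$, which makes Jensen's-inequality-based aggregate redundancy from Lemma~\ref{lem:redundancysum} collapse to $o(\tilde h n)$.
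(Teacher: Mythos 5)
Your proposal is correct and follows essentially the same route as the paper's own proof: apply Lemma~\ref{lem:redundancysum} with total length $\tilde h n$ and roughly $|\Sset|$ bitvectors, dispose of the additive $O(|\Sset|)$ term via Lemma~\ref{lem:sigmanegligible}, and show $\tilde h n/|\Sset| = \Omega(\log n)$ by chaining Lemma~\ref{lem:hbound} with Lemma~\ref{lem:entropymin}. The only (harmless) additions are your explicit handling of the $|\Sset|=1$ case and the count of internal nodes as $|\Sset|-1$.
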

\begin{proof}
  Since the bitvector lengths add up to $\tilde h n$, we can apply
  Lemma~\ref{lem:redundancysum} and obtain that the redundancy are
  bounded by
  \begin{equation*}
    O\left(\tilde h n \frac{\log \log \frac{\tilde h n}{|\Sset|}}{\log \frac{\tilde h n}{|\Sset|}} + |\Sset|\right) \text{.}
  \end{equation*}
  The term in $|\Sset|$ is already taken care of by
  Lemma~\ref{lem:sigmanegligible}. It suffices then to prove that
  \begin{equation*}
    \frac{\log \log \frac{\tilde h n}{|\Sset|}}{\log \frac{\tilde h n}{|\Sset|}}
  \end{equation*}
  is negligible as $n$ grows, and because $f(x) = \frac{\log \log
    x}{\log x}$ is asymptotic to $0$, we just need to prove that
  $\frac{\tilde h n}{|\Sset|}$ grows to infinity as $n$ does. Using
  again Lemma~\ref{lem:hbound} and Lemma~\ref{lem:entropymin} we
  obtain that
  \begin{equation*}
    \frac{\tilde h n}{|\Sset|} \geq \frac{nH_0(S)}{|\Sset|} \geq \frac{(|\Sset| - 1) \log n}{|\Sset|} \geq \frac{\log n}{2}
  \end{equation*}
  thus proving the lemma.
\end{proof}

\begin{lemma}\label{lem:rrrdelim}
  The partial sum data structure used to delimit the RRR bitvectors in
  $WT(S)$ occupies $o(\tilde h n)$ bits.
\end{lemma}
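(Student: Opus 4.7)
The plan is to bound the partial sum structure in terms of its two parameters: the number $k$ of items (one per RRR-encoded bitvector) and the total length $N$ of the concatenated encodings. First I would recall that the underlying Patricia Trie has $|\Sset|-1$ internal nodes, so $k = |\Sset|-1$. Next I would bound $N$: by the analysis preceding Lemma~\ref{lem:redundancywt}, the RRR encoding of a bitvector of length $m_i$ takes $m_i + o(m_i)$ bits, and summing over all internal nodes gives $N \le \tilde h n + o(\tilde h n) = O(\tilde h n)$.

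Then I would invoke the partial sum data structure of \cite{RRR07} on $k$ items of total length $N$, whose space is $\logbinom(k, N+k) + o(N)$ bits. Using the bound $\logbinom(m,n) \le m\log(n/m) + O(m)$ stated in Section~\ref{sec:preliminaries}, the space becomes
\begin{equation*}
  O\!\left(|\Sset|\,\log\!\frac{\tilde h n}{|\Sset|}\right) + O(|\Sset|) + o(\tilde h n).
\end{equation*}
The additive $O(|\Sset|)$ term is already $o(\tilde h n)$ by Lemma~\ref{lem:sigmanegligible}, so it only remains to bound the first term.

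Setting $t = \tilde h n / |\Sset|$, the first term equals $\frac{\tilde h n \,\log t}{t}$. Now I would reuse the calculation from the proof of Lemma~\ref{lem:redundancywt}: combining Lemma~\ref{lem:hbound} with Lemma~\ref{lem:entropymin} yields $t \ge (\log n)/2$, hence $t \to \infty$ as $n \to \infty$. Since $(\log t)/t \to 0$, the term $\frac{\tilde h n\,\log t}{t}$ is $o(\tilde h n)$, which completes the argument. The only subtle step is showing that $t$ indeed tends to infinity: this is not immediate because in principle $|\Sset|$ could be as large as $\Theta(n)$, but the lower bound $t \ge (\log n)/2$ handles precisely that regime. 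Apart from this, the argument is a direct calculation using the bounds already established in the appendix.
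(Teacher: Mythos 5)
Your proof is correct and follows essentially the same route as the paper: bound the partial-sum structure by $\logbinom(|\Sset|, N+|\Sset|) \leq |\Sset|\log(N/|\Sset|) + O(|\Sset|)$, dispose of the $O(|\Sset|)$ term via Lemma~\ref{lem:sigmanegligible}, and kill the main term using $(\log t)/t \to 0$ together with the lower bound $\tilde h n/|\Sset| \geq (\log n)/2$ from Lemmas~\ref{lem:hbound} and~\ref{lem:entropymin}. The only (harmless) difference is that you bound the concatenation length crudely by $N = O(\tilde h n)$, whereas the paper uses Lemma~\ref{lem:redundancywt} to write it as $nH_0(S) + o(\tilde h n)$ and splits the logarithm into two pieces; your version is slightly more streamlined and equally valid.
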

\begin{proof}
  By Lemma~\ref{lem:redundancywt} the sum of the RRR encodings is $n
  H_0(S) + o(\tilde h n)$. To encode the $|\Sset|$ delimiters, the
  partial sum structure of \cite{RRR07} takes
  \begin{equation*}
    \begin{split}
      &|\Sset|\log\left(\frac{nH_0(S) + o(\tilde h n) +
          |\Sset|}{|\Sset|}\right) + O(|\Sset|) \\
      & \hspace*{1.2em} \leq |\Sset| \log\left(\frac{nH_0(S)}{|\Sset|}\right) +
      |\Sset|\log\left(\frac{o(\tilde h n)}{|\Sset|}\right) +
      O(|\Sset|) \text{.}
    \end{split}
  \end{equation*}
  The third term is negligible by Lemma~\ref{lem:sigmanegligible}. The
  second just by dividing by $\tilde h n$ and noting that $f(x) =
  \frac{\log x}{x}$ is asymptotic to $0$. It remains to show that the
  first term is $o(\tilde h n)$. Dividing by $\tilde h n$ and using
  again Lemma~\ref{lem:hbound} we obtain
  \begin{equation*}
    \frac{|\Sset| \log\left(\frac{nH_0(S)}{|\Sset|}\right)}{\tilde h n} \leq \frac{|\Sset| \log\left(\frac{nH_0(S)}{|\Sset|}\right)}{n H_0(S)} = \frac{\log\left(\frac{nH_0(S)}{|\Sset|}\right)}{\frac{n H_0(S)}{|\Sset|}}
  \end{equation*}
  By using again that $f(x) = \frac{\log x}{x}$ is asymptotic to $0$
  and proving as in Lemma~\ref{lem:redundancywt} that
  $\frac{nH_0(S)}{|\Sset|}$ grows to infinity as $n$ does, the result
  follows.
\end{proof}

\section{Dynamic Patricia Trie}\label{sec:patriciaops}

For the dynamic Wavelet Trie we use a straightforward Patricia Trie
data structure. Each node contains two pointers to the children, one
pointer to the label and one integer for its length. For $k$ strings,
this amounts to $O(k w)$ space. Given this representation, all
navigational operations are trivial. The total space is $O(kw) + |L|$,
where $L$ is the concatenation of the labels in the compacted trie as
defined in Theorem~\ref{thm:ltbound}

Insertion of a new string $s$ \emph{splits} an existing node, where
the mismatch occurs, and adds a leaf. The label of the new internal
node is set to point to the label of the split node, with the new
label length (corresponding to the mismatch of $s$ in the split
node). The split node is modified accordingly. A new label is
allocated with the suffix of $s$ starting from the mismatch, and
assigned to the new leaf node. This operation takes $O(|s|)$ time and
the space grows by $O(w)$ plus the length of the new suffix, hence
maintaining the space invariant.

When a new string is deleted, its leaf is deleted and the parent node
and the other child of the parent need to be merged. The highest node
that shares the label with the deleted leaf is found, and the label is
deleted and replaced with a new string that is the concatenation of
the labels from that node up to the merged node. The pointers in the
path from the found node and the merged node are replaced accordingly.
This operation takes $O(\hat \ell)$ where $\hat \ell$ is the length of
the longest string in the trie, and the space invariant is maintained.

\end{document}